\theoremstyle{remark}
\newtheorem{corollary}{Corollary}
\newtheorem{theorem}{Theorem}
\newtheorem{lemma}{Lemma}
\begin{document}
\title{HARQ-IR Aided Short Packet Communications: BLER Analysis and Throughput Maximization}
\author{Fuchao He, Zheng Shi, Guanghua Yang, Xiaofan Li, Xinrong Ye, and Shaodan Ma
\thanks{Fuchao He, Zheng Shi, Guanghua Yang, and Xiaofan Li are with the School of Intelligent Systems Science and Engineering and GBA and B\&R International Joint Research Center for Smart Logistics, Jinan University, Zhuhai 519070, China (e-mails: hefuchao@stu2022.jnu.edu.cn, shizheng0124@gmail.com, ghyang@jnu.edu.cn, lixiaofan@jnu.edu.cn).}
\thanks{Xinrong~Ye is with the School of Physics and Electronic Information, Anhui Normal University, Wuhu 241002, China (e-mail: shuchong@mail.ahnu.edu.cn).}
\thanks{Shaodan Ma is with the State Key Laboratory of Internet of Things for Smart City and the Department of Electrical and Computer Engineering, University of Macau, Macao SAR, China (e-mail: shaodanma@um.edu.mo).}
}

\maketitle
\begin{abstract}

This paper introduces hybrid automatic repeat request with incremental redundancy (HARQ-IR) to boost the reliability of short packet communications. The finite blocklength information theory and correlated decoding events tremendously preclude the analysis of average block error rate (BLER). Fortunately, the recursive form of average BLER motivates us to calculate its value through the trapezoidal approximation and Gauss–Laguerre quadrature. Moreover, the asymptotic analysis is performed to derive a simple expression for the average BLER at high signal-to-noise ratio (SNR). Then, we study the maximization of long term average throughput (LTAT) via power allocation meanwhile ensuring the power and the BLER constraints. For tractability, the asymptotic BLER is employed to solve the problem through geometric programming (GP). However, the GP-based solution underestimates the LTAT at low SNR due to a large approximation error in this case. Alternatively, we also develop a deep reinforcement learning (DRL)-based framework to learn power allocation policy. In particular, the optimization problem is transformed into a constrained Markov decision process, which is solved by integrating deep deterministic policy gradient (DDPG) with subgradient method. The numerical results finally demonstrate that the DRL-based method outperforms the GP-based one at low SNR, albeit at the cost of increasing computational burden.

\end{abstract}
\begin{IEEEkeywords}
Block error rate, deep reinforcement learning, HARQ-IR, Markov decision process, short packet communications.
\end{IEEEkeywords}

\section{Introduction}
\subsection{Background}

The timeliness of information transmission has recently become more and more important especially in the future highly interconnected society \cite{ref30}. To fulfill this goal, short packet communications have emerged as one of the key technologies in sixth generation (6G) wireless communication systems that enable real-time mobile communication services/applications, e.g., sensory interconnection, factory automation, autonomous driving, metaverse, digital twins, holographic communications, etc. \cite{ref1,ref2}. However, the classical Shannon information theory is established under the assumption of infinite blocklength, which is no longer valid in short packet communications. Thereupon, Yury Polyanskiy {\emph {et al.}} in \cite{ref29} developed the finite blocklength information theory, which rigorously substantiates the deterioration of reliability performance with the shortening of the packet length.




\subsection{Related Works}
The reliability performance of various short packet communication systems has been intensively reported in the literature. To name a few, in \cite{ref47}, the average block error rate (BLER) of short packet communications over independent multiple-input multiple-output (MIMO) fading channels was approximately obtained by using Laplace transform. In \cite{ref49}, the average BLER of short packet vehicular communications over correlated cascaded Nakagami-$m$ fading channels was derived by capitalizing on Mellin transform. A partial cooperative non-orthogonal multiple access (NOMA) protocol was proposed to assist short packet communications in \cite{ref50}, in which the asymptotic BLER analysis was performed to reveal engineering insights. Moreover, by considering the hardware impairments and channel estimation errors in \cite{ref51}, the BLER of the NOMA-assisted short packet communications was analyzed. Furthermore, by integrating the intelligent reflecting surface (IRS) with NOMA, the BLER of short packet communications was obtained by considering random and optimal phase shifts \cite{ref52}. In addition, by taking into account the possibility of failed elements at IRS, the asymptotic BLER was further derived in \cite{ref53}, with which the performance loss caused by phase errors and failure elements was thoroughly investigated. All the prior literature justify the contradictory relationship between reliability and latency, that is, the low latency of short packet communications is achieved at the cost of degraded reliability.

In order to enhance the reception reliability of short packet communications, hybrid automatic repeat request (HARQ) is a promising technology to flexibly balance the latency and the reliability of communications \cite{ref19}. In general, HARQ can be categorized into three basic types according to different coding/combining techniques, including Type-I HARQ, HARQ with chase combining (HARQ-CC), and HARQ with incremental redundancy (HARQ-IR). There are only a few existing works that studied HARQ-assisted short packet communications. Particularly, Type-I HARQ and HARQ-CC-aided short packet communications
were examined in \cite{ref5}, where BLER was derived by assuming Nakagami-$m$ fading channels and ignoring the correlation among decoding failures. Moreover, in \cite{ref10}, a dynamic NOMA-HARQ-CC scheme was devised to schedule the transmissions of the old and fresh status updates, where the optimal power sharing fraction is chosen by invoking Markov decision process (MDP). Dileepa Marasinghe {\emph {et al.}} in \cite{ref6} studied the block error performance of NOMA-HARQ-CC assisted short packet communications by only considering the decoding failure in the current HARQ round. 
In \cite{ref9}, the BLER of HARQ-CC and HARQ-IR aided short packet communications over constant fading channels were derived by using Markov chain. By assuming quasi-static fading channels, \cite{ref8} derived the average BLER of HARQ-IR-aided short packet communications by using linear approximation technique. In addition, the average BLER of HARQ-IR-aided short packet communications was obtained by using Gaussian approximation in \cite{ref36}, with which dynamic programming was invoked to minimize the average power subject to the latency constraint. The similar method was extended to investigate the throughput maximization in \cite{ref35}. Moreover, bearing in mind the Gaussian approximation, the authors in \cite{ref34} studied the effective capacity of HARQ, which enabled the minimization of average power while maintaining the minimum effective capacity. Unfortunately, the above literature derived the average BLER of HARQ-IR-aided short packet communications based on either the oversimplified assumption (e.g., quasi-static fading, independent decoding failures, etc) or approximation (e.g., Gaussian approximation), which might underestimate the actual system performance. Accordingly, this paper focuses on the accurate performance evaluation and effective design of HARQ-IR-aided short packet communications.

\subsection{Main Contributions}
This paper examines the average BLER and LTAT maximization of HARQ-IR-aided short packet communications. To recapitulate, the contributions of this paper can be briefly summarized as follows:
\begin{itemize}
    \item The finite blocklength (FBL) information theory and the correlation between decoding events across different HARQ rounds considerably challenge the analysis of the average BLER. Fortunately, by noticing the recursive integration form of the average BLER, the trapezoidal approximation method and Gauss–Laguerre quadrature method are proposed to numerically evaluate the average BLER. Besides, the dynamic programming is leveraged to further reduce the complexity of the Gauss–Laguerre quadrature method that originates from the avoidance of redundant calculations. Furthermore, the asymptotic analysis is conducted to offer a simple approximate expression for the average BLER in the high SNR regime.

    \item This paper investigates the power allocation scheme of HARQ-IR-aided short packet communications. The goal is maximizing the  LTAT while guaranteeing the power and the BLER constraints. With the asymptotic results, the optimization problem can be converted into a convex problem through the geometric programming (GP). Whereas, the GP-based solution underestimates the system performance in low SNR. This is due to the fact that there is a large approximation error between the exact and the asymptotic BLER.


    \item To address this issue, the tool of deep reinforcement learning (DRL) is utilized to learn power allocation policy from the environment. In doing so, the optimization problem is first transformed into a constrained Markov decision process (MDP) problem, which can be solved via the combination of DDPG framework and subgradient method. In addition, the truncation-based updating strategy is proposed to warrant the stable convergence of its training process. The numerical results finally show that the DRL-based method is superior to the GP-based one in sacrificed of high computational burden during offline training stage.


\end{itemize}

\subsection{Structure of This Paper}
The rest of the paper is outlined as follows. Section \ref{SYSTEM MODEL} introduces the system model of HARQ-IR aided short packet communications. The average BLER is derived in Section \ref{PERFORMANCE ANALYSIS}. Section \ref{Throughput Optimization} studies the maximization of LTAT. Section \ref{SIMULATION ANALYSIS} carries out the numerical analysis for verification. Finally, Section \ref{CONCULSION} concludes this paper.

\section{System Model}\label{SYSTEM MODEL}

Short packet transmission is one of the key enabling technologies to ensure low latency. However, according to the finite blocklength coding theory, the successful reception cannot be guaranteed even if the channel capacity is larger than the transmission rate, which leads to the degradtion of the reliability. To overcome this issue, HARQ-IR is adopted to increase the reliability of short packet communications. In the following, the proposed HARQ-IR aided short packet transmission scheme and the performance metric of reliability, i.e., block error rate (BLER), are introduced.



\subsection{HARQ-IR-Aided Short Packet Transmissions}
By following the principle of HARQ-IR, each message consisting of $\mathcal{K}$ information bits is first encoded into a mother codeword, which is then chopped into $M$ short sub-codewords each with a length of $L$. Clearly, the number of transmissions for each message is allowed up to $M$ and an outage event is declared if the maximum number of transmissions is exceeded. Mathematically, by assuming block fading channels, the received signal at the $m$-th HARQ round is expressed as
\begin{equation}
    \bold{y}_m = \sqrt{P_m}h_m\bold{x}_m + \bold{n}_m,
\end{equation}
where $P_m$, $\bold{x}_m$, ${\bf n}_m$, and $h_m$ represent the transmission power, the transmitted sub-codeword, the complex additive Gaussian noise, and the channel coefficient at the $m$-th HARQ round, respectively. More specifically, we assume that $h_m$ obeys Rayleigh fading and ${\bf n}_m$ follows normal distribution with mean zero and variance $\mathcal N_0$, i.e., $\bold{n}_m \sim \mathcal{CN}(0, \mathcal N_0)$. Once the receiver fails to decode the message in the $m$-th HARQ round, a negative acknowledgement message (NACK) is fed back to the sender to request a retransmission. Only if the receiver successfully decodes the message or the maximum number of transmissions is reached, the next new message is delivered in subsequent transmission. 
The transmission protocol of HARQ-IR aided short packet communications is illustrated in Fig. \ref{HARQ-IR}.





\begin{figure}[h]
\centerline{\includegraphics[width=1\linewidth]{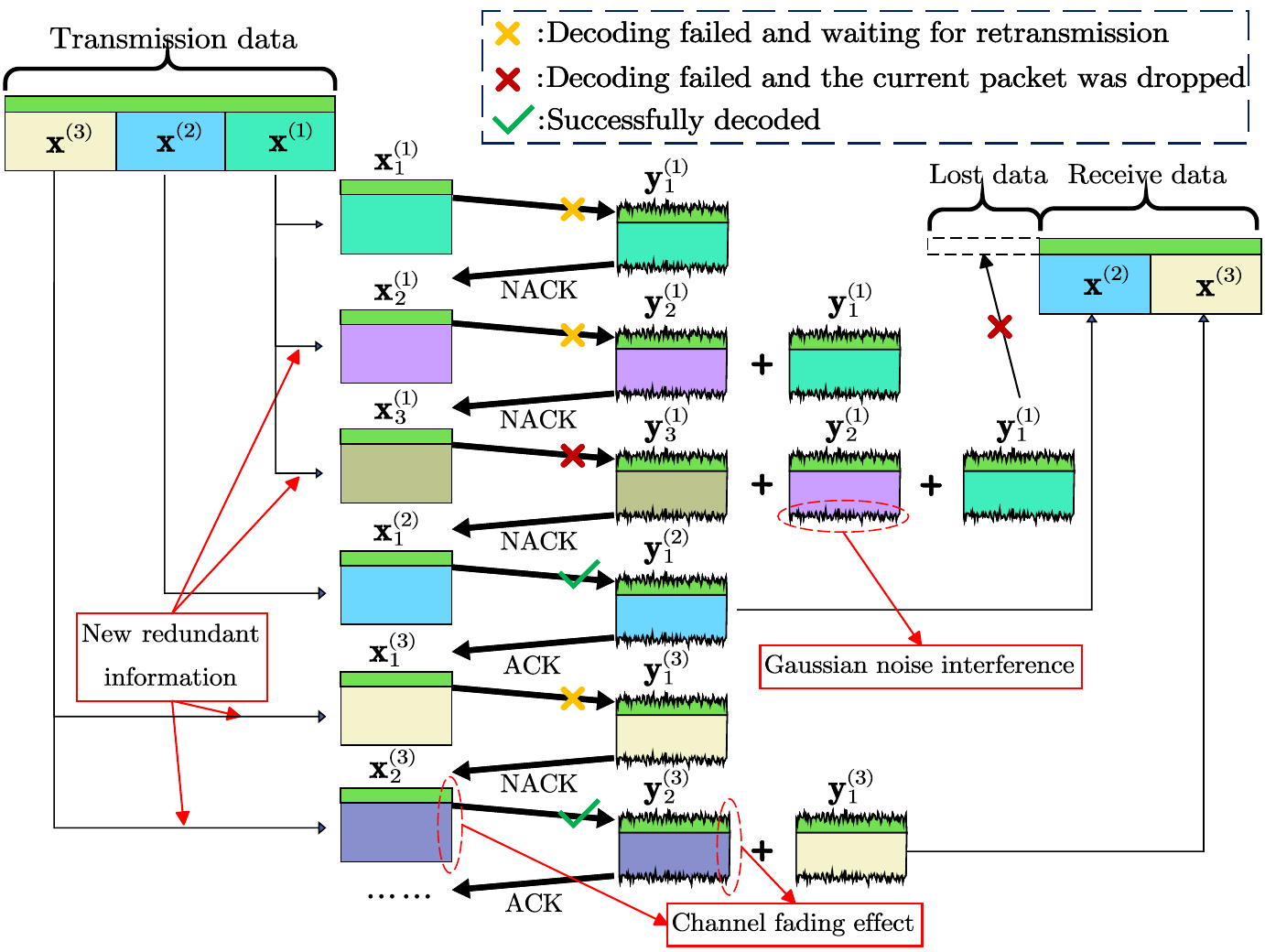}}
\caption{An example of HARQ-IR aided short packet communications with $M=3$.}
\label{HARQ-IR}
\end{figure}

\subsection{BLER}
A failure $\mathcal F_{M}$ of delivering a message takes place if and only if it cannot be reconstructed within $M$ HARQ rounds. According to the finite blocklength (FBL) information theory, conditioned on the channel gains $g_m = |h_m|^2$ for $m=1,\cdots,M$ and failures at the first $M-1$ HARQ rounds, the conditional BLER $\epsilon_M$ after $M$ HARQ rounds is given by \cite{ref27}
\begin{align}\label{epsilon-m}
        \epsilon_M &= \Pr\{ \mathcal F_{M} | g_1,\cdots, g_M, \mathcal F_{M-1}\} \notag\\
        &\approx Q\Bigg(\frac{\sum_{m=1}^M \log_2(1 + \bar \gamma_m g_m) - R}{\log_2 e\sqrt{\frac{1}{L}\sum_{m=1}^M \big(1 - (1 + \bar \gamma_m g_m)^{-2} \big)}}\Bigg),
\end{align}
where $\mathcal F_{m}$ denotes the decoding failure at the $m$-th HARQ round, $\bar \gamma_m = P_m/\mathcal N_0$ is the average transmit SNR at the $m$-th HARQ round, $R=\mathcal{K}/L$ refers to the initial transmission rate, 
and $Q(x) = \int_{x}^{+\infty}(\sqrt{2\pi}e^{{t^2}/{2}})^{-1}dt$. Besides, we stipulate that $\epsilon_0 = 1$. On the basis of \eqref{epsilon-m} along with Bayes' formula, the average BLER after $M$ HARQ rounds is obtained as
\begin{equation}\label{origin_real}
    \begin{aligned}
        \mathbb{\bar P}_M =& \Pr\{\mathcal F_{1},\cdots,\mathcal F_{M}\} \\
        =& \mathbb{E}_{g_1,\cdots,g_M}\Bigg[\prod_{m=1}^{M}\epsilon_{m} \Bigg], \\
    \end{aligned}
\end{equation}
where $\epsilon_{m}$ can be obtained by replacing $M$ in \eqref{epsilon-m} with $m$, $\mathbb{E}[\cdot]$ stands for the expectation operation. By noticing Rayleigh fading channels, the channel gain $g_m$ follows exponential distribution with probability density function (PDF) given by
\begin{equation} \label{channel gain}
    \begin{aligned}
        f_{g_m}(x) = {\lambda}^{-1} e^{-{\lambda}^{-1}x} ,\, x > 0,
    \end{aligned}
\end{equation}
where $\lambda$ denotes the average power of channel gain.

\section{Analysis of Average BLER}\label{PERFORMANCE ANALYSIS}
Due to the fractional form in $Q$ function and the correlation among $\epsilon_{m}$ caused by the same channel gain $g_m$, it is fairly intractable to derive a closed-form expression for \eqref{origin_real}. Hence,  this section is devoted to numerically evaluated the average BLER $\mathbb{\bar P}_M$ of HARQ-IR aided short packet communications.

In order to facilitate the computation of the average BLER, we apply an approximation $\sqrt{\sum_{m=1}^{M}\big(1 - {(1 + \bar \gamma_m g_m)^{-2}} \big)} \approx \sqrt{M}$ to (\ref{epsilon-m}), which yields
\begin{equation}\label{eqn:eps_Mapp}
    \epsilon_M \approx Q\Bigg(\frac{\sum_{m=1}^M \log_2(1 + \bar \gamma_m g_m) - R}{\sqrt{{M}}V}\Bigg),
\end{equation}
where $V = \sqrt{{1}/{L}}\log_2 e$ and this approximation is valid in the high SNR regime \cite{ref32}. Similarly to \eqref{eqn:eps_Mapp}, we can also obtain the approximate expressions of $\epsilon_1,\cdots,\epsilon_{M-1}$. Thus, the average BLER of the $M$-th transmission is approximated as
\begin{equation}
    \begin{aligned}\label{E_epsilon'_m}
        &\mathbb{\bar P}_M \approx \mathbb{E}_{g_1,\cdots,g_M}\Big[\mathcal{Q}_1(C_1)\cdots \mathcal{Q}_M(C_1+\cdots C_M)\Big]\\
       & =\int\limits_0^{+\infty}\cdots\int\limits_0^{+\infty}\prod_{m=1}^M \mathcal{Q}_m\left(\sum\limits_{i=1}^{m}x_i\right) f_{C_m}(x_m) dx_1\cdots dx_M,
    \end{aligned}
\end{equation}
where $\mathcal{Q}_{m}(y) = Q({y - R}/({\sqrt{m}V}))$, $C_m = \log_2(1 + \bar \gamma_m g_m)$, and $f_{C_m}(x)$ denotes the PDF of $C_m$. Particularly, according to \eqref{channel gain}, $f_{C_m}(x)$ is obtained as
\begin{equation}
    \begin{aligned}\label{eqn:f_cm_pdf}
        f_{C_m}(x) 
        &=\frac{\lambda^{-1}\ln2}{\bar \gamma_m}2^{x}e^{-(2^{x} - 1){\bar \gamma_m^{-1}}\lambda^{-1}}, & x > 0.\\
    \end{aligned}
\end{equation}

By making change of variables $S_m = \sum_{i=1}^m C_i$ for $m\in[1,M]$ and leveraging Jacobian transform, \eqref{E_epsilon'_m} can be rewritten as
\begin{align}\label{E_epsilon'_m1}
        \mathbb{\bar P}_M \approx&\int_{S_0}^{+\infty}\cdots\int_{S_{M-1}}^{+\infty}\prod_{m=1}^M\mathcal{Q}_m(S_m)f_{C_m}(S_m-S_{m-1})\notag\\
        &\times \det(\bold{J}_{\bold{C}\rightarrow\bold{S}})dS_1\cdots dS_M,
    \end{align}
wherein $S_0 = 0$, ${\bf C}=(C_1,\cdots,C_M)$, ${\bf S}=(S_1,\cdots,S_M)$, and the Jacobian transform matrix is given by
\begin{equation}
    \begin{aligned}
    \bf{J}_{\bold{C}\rightarrow\bold{S}}&= \left(\frac{\partial C_i}{\partial S_j} \right)=
        \left( {\begin{array}{*{20}{c}}
1&0&0& \cdots &0\\
1&1&0& \cdots &0\\
0&1&1& \cdots &0\\
 \vdots & \ddots & \ddots & \ddots & \vdots \\
0&0&0&1&1
\end{array}} \right).
    \end{aligned}
\end{equation}
Thus we have $\det(\bold{J}_{\bold{C}\rightarrow\bold{S}}) = 1$. However, it is intractable to derive the multifold integral \eqref{E_epsilon'_m1} in closed-form. Fortunately, it is easily seen by inspection that \eqref{E_epsilon'_m1} can be rewritten in a recursive form as
\begin{align}\label{bar_BLER_origin}
        \mathbb{\bar P}_M =&\int_{S_0}^{+\infty}\cdots\int_{S_{M-1}}^{+\infty}\prod_{m=1}^M\mathcal{Q}_m(S_m)f_{C_m}(S_m-S_{m-1})\notag\\
        &\times \phi_M(S_M)dS_1\cdots dS_M\notag\\
        =&\int_{S_0}^{+\infty}\cdots\int_{S_{M-2}}^{+\infty}\prod_{m=1}^{M-1}\mathcal{Q}_m(S_m)f_{C_m}(S_m-S_{m-1})\notag\\
        &\times \phi_{M-1}(S_{M-1})dS_1\cdots dS_{M-1}\notag\\
        =& \cdots =\int_{S_0}^{+\infty} \mathcal{Q}_1(S_1)f_{C_1}(S_1-S_0) \phi_{1}(S_1)dS_1 \notag\\
        =&  \phi_0(S_0),
\end{align}
where ${\phi _m}({S_m})$ is defined in \eqref{phi_sm} as shown at the top of this page.
\begin{figure*}
\begin{equation}\label{phi_sm}
{\phi _m}({S_m}) = \left\{ {\begin{array}{*{20}{l}}
{\int_{{S_m}}^{ + \infty } {{{\cal Q}_{m + 1}}} ({S_{m + 1}})f_{C_{m+1}}({S_{m + 1}} - {S_m}){\phi _{m + 1}}({S_{m + 1}})d{S_{m + 1}},}&{0 \le m \le M - 1}\\
1,&{m = M}
\end{array}} \right. .
\end{equation}
\hrulefill
\end{figure*}
The recursive relationship among ${\phi _m}({S_m})$ motivates us to convert the multifold integral \eqref{E_epsilon'_m1} into $M$ successive single-fold integrals. There are many sophisticated numerical integration methods that can be utilized to iteratively compute the single-fold integral in \eqref{phi_sm}. To do so, several numerical integration methods are developed to evaluate $\mathbb{\bar P}_M$ in the following, including trapezoidal approximation, Gauss–Laguerre quadrature, and dynamic programming.
\subsection{Trapezoidal Approximation Method}\label{sec:trap}
The essence of trapezoidal rule for integration is dividing the integration domain into $K$ intervals, i.e, $[a_k, a_{k+1}), k = 1,\cdots,K$, where each interval has the same length $\mathcal{H}$. To realize the calculation of \eqref{phi_sm} with trapezoidal approximation, the integral defined in \eqref{phi_sm} is truncated from an infinite interval into a finite one, i.e.,
\begin{align}\label{G_L}
         \phi_{m}(S_{m})  &= \int_{0}^{\infty}\mathcal{Q}_{m+1}(S_{m}+x)f_{C_{m+1}}(x)\phi_{m+1}(S_{m}+x)dx\notag\\
        &\approx \int_{0}^{U}\psi_{m}(x)dx\notag\\
        &\approx \frac{\mathcal{H}}{2}\sum_{k=1}^{K}\Big(\psi_{m}((k-1)\mathcal H) + \psi_{m}(k\mathcal H)\Big),
\end{align}
where the first equality holds by change of variable $x = S_{m+1} - S_m $, $U$ denotes the upper bound of the truncated integration interval, and  $\psi_{m}(x)$ is defined as
\begin{equation}\label{eqn:psi_m_x}
    \begin{aligned}
        \psi_{m}(x) =& \mathcal{Q}_{m+1}(S_{m} + x)f_{C_{m+1}}(x)\phi_{m+1}(S_{m} + x).
    \end{aligned}
\end{equation}
It is clear from \eqref{G_L} that $U = K \mathcal H$ follows. The discussion of the truncation error in \eqref{G_L} is deferred to the next subsection.

However, in order to calculate the average BLER $\mathbb{\bar P}_M$, the trapezoidal approximation method needs to integrate out $S_M,\cdots,$ and $S_1$ layer by layer according to \eqref{G_L}. To enable such layer-by-layer numerical evaluation, it is necessary to approximate $\phi_{m}(S_{m})$ with a discrete function. Henceforth, it is intuitive to discretize the function of $\phi_{m}(S_{m})$ with the same interval $\mathcal H$ to avoid possible necessity of curve fitting.
Furthermore, the trapezoidal approximation algorithm should be well designed to avoid repeated calculations of the intermediate values $\phi_{m}(x)$. For example, it involves the same evaluations of the values $\psi_{m}(2\mathcal H),\cdots,\psi_{m}((K+1)\mathcal H)$ when we compute $\phi_{m}(\mathcal H)$ and $\phi_{m}(2 \mathcal H)$.

To calculate $\phi_0(S_0)$, it suffices to obtain $(K+1)$ values of $\psi_{1}(x)$ (i.e., $\psi_{1}(0),\psi_{1}(\mathcal H)\cdots,\psi_{1}(K\mathcal H)$) according to the trapezoidal approximation in \eqref{G_L}. For each value of $\psi_{1}(x)$, we first have to get the values of $\phi_1(x)$ at $x=0,\mathcal H,\cdots, K\mathcal H$. By analogy, to obtain these values, we further need to calculate $2(K+1)$ values of $\psi_{2}(x)$, i.e., $\psi_{2}(0),\psi_{2}(\mathcal H)\cdots,\psi_{2}(K\mathcal H),\cdots,\psi_{2}(2K\mathcal H)$. Thereon, $\tilde{K} = M(K+1)$ initial values of $\psi_{M}(x)$ should be obtained in the first layer of integration, i.e., $\psi_{M}(0),\psi_{M}(\mathcal H),\cdots, \psi_{M}(\tilde K \mathcal H)$, wherein $\phi_{M}(0)=\phi_{M}(\mathcal H)=\cdots=\phi_{M}(\tilde K \mathcal H)=1$ are needed for the calculation. The computational procedure of the proposed trapezoidal approximation method is summarized in Fig. \ref{trapezoidal_approximation_algorithm}, as shown at the top of the next page.
\begin{figure*}[h]
\centerline{\includegraphics[width=1\linewidth]{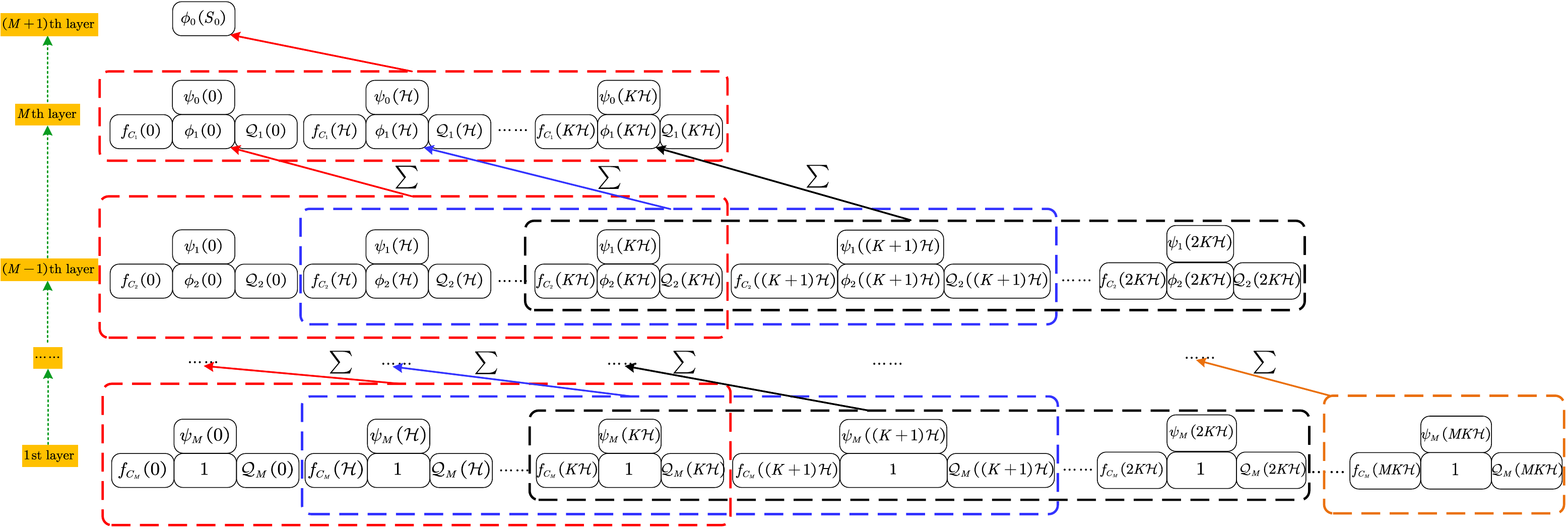}}
\caption{Computational procedure of trapezoidal approximation.}
\label{trapezoidal_approximation_algorithm}
\end{figure*}
\subsubsection{Error Analysis}
Clearly from \eqref{G_L}, the approximation error consists of two parts, i.e., truncation error and discretization error. More specifically, the truncation operation in \eqref{G_L} would inevitably yield an approximation error.
To choose an appropriate truncation value of $U$, the truncation error $\mathcal E_U$ of \eqref{G_L} should be analyzed. According to \eqref{G_L}, since $\mathcal{Q}_{m}(y) \le 1$ and $\phi_{m}(x) \le 1$, $\mathcal E_U$ is upper bounded as
\begin{align}\label{eu}
        \mathcal E_U & =\int_{U}^{+\infty}\mathcal{Q}_{m+1}(S_{m}+x)f_{C_{m+1}}(x)\phi_{m+1}(S_{m}+x)dx \notag\\
        &\le  \int_U^{+\infty} f_{C_m}(x) dx = 1 - F_{C_m}(U),
\end{align}
where
\begin{equation}
    \begin{aligned}
        F_{C_m}(x) =
        \begin{cases}
            1 - e^{-\lambda^{-1}{\bar \gamma_m}^{-1}(2^x - 1)} & x>0\\
            0 & \rm{else}
        \end{cases}.
    \end{aligned}
\end{equation}
Given the required truncation error $\mathcal E_U$, the upper bound of $\mathcal E_U$ in \eqref{eu} indicates that the upper limit $U$ of the integral should be set as $U\geq \log_2(1-\lambda \bar \gamma_{\max} \ln \mathcal E_U)\ge \log_2(1-\lambda \bar \gamma_m \ln \mathcal E_U)$, where $\bar \gamma_{\max} = \max\{\bar \gamma_1,\cdots, \bar \gamma_M\}$. For instance, by fixing $ \bar \gamma_{\max} =20$~dB, $\lambda = 0$ dBW, and $\mathcal E_U=10^{-5}$, the minimum value of $U$ is given by $\log_2(1-1 \times 100 \times \ln 10^{-5})\approx 10.17$.

Moreover, with regard to the discretization error, it essentially stems from the trapezoidal approximation. As proved in \cite[eq.(5.1.9)]{ref37}, the asymptotic error estimate for $K\to \infty$ is given by
\begin{equation}\label{eqn:K_e}
    \mathcal E_K \approx \frac{\mathcal H^2}{12}(\psi_{m}^\prime(0) - {\psi _m^\prime} (U)).
\end{equation}
It is readily found that $\psi_{m}^\prime(x)$ is a bounded function according to the recursive relationship in \eqref{G_L} and \eqref{eqn:psi_m_x}. Therefore, the discretization error can be guaranteed by reducing the length of interval $\mathcal H$.


\subsubsection{Analysis of Computational Complexity}
The computational complexity of the trapezoidal approximation method is linearly proportional to the number of how many times of $\psi_m(S_m)$ need to be calculated, because the main computational overhead originates from the evaluation of the involved $Q $ function. Therefore, the required number of the evaluations of $\psi_m(S_m)$ in the algorithm is $(K+1)+2(K+1)+\cdots+(M-1)(K+1) = (M-1) \tilde K /2$. Accordingly, the computational complexity of trapezoidal approximation method is $O((M-1) \tilde K /2) = O(M^2K)$, where $O(\cdot)$ represents the big-O notation. According to \eqref{eu}, the computational complexity of the trapezoidal approximation algorithm can be rewritten as $O({M^2}{\log _2}(1 - \lambda {\bar \gamma_{\max}}\ln {{\cal E}_U})/{\cal H})$. Clearly, the computational complexity vastly depends on the selection of $\mathcal H$. To ensure the computational accuracy, $\mathcal H$ should be set with a small number, which consequently entails considerable complexity.

\subsection{Gauss–Laguerre Quadrature Method}
By applying Gauss–Laguerre quadrature to \eqref{phi_sm}, $\phi_{m}(S_{m})$ can be approximated as
\begin{align}\label{dp_G_L}
 &   \phi_{m}(S_{m}) =  \int\nolimits_{0}^{+\infty}e^{-x}\mathcal{Q}_{m+1}(S_{m}+x)f_{C_{m+1}}(x)\notag\\
    &\times \phi_{m+1}(S_{m}+x)e^xdx\notag\\
    \approx& \sum_{i=1}^{N} w_i
    \mathcal{Q}_{m+1}(S_{m}+\xi_i)f_{C_{m+1}}(\xi_i)\phi_{m+1}(S_{m}+\xi_i)e^{\xi_i}, 
\end{align}
where $N$ is the quadrature order (numerical experiments indicate that ${N}$ should be set at least 10), $\xi_i$ is the $i$-th root of Laguerre polynomial $\mathcal{L}_{N}(x)$, the Laguerre polynomial $\mathcal{L}_{N}(x)$ and the weights $ w_i$ are respectively given by \cite{ref38}
\begin{align}\label{cal_xi}
        \mathcal{L}_{N}(x) &= \frac{e^x}{{N}!}\frac{d^N}{dx^N}(e^{-x}x^{N}),
\end{align}
\begin{align}\label{cal_w}
        w_i = \frac{\xi_i}{({N}+1)^2[\mathcal{L}_{{N}+1}(\xi_i)]^2}.
\end{align}


By substituting \eqref{dp_G_L} into \eqref{bar_BLER_origin}, the average BLER can be approximated as
\begin{align}\label{cycle}
        \mathbb{\bar P}_M \approx& \sum_{j_1,\cdots,j_M\in [1,N]} e^{\sum_{m=1}^M\xi_{{j_m}}}\notag\\
        &\times \prod_{m=1}^M w_{j_m} f_{C_{m}}(\xi_{j_m}) \mathcal{Q}_{m}\Big(\sum\nolimits_{i=1}^{m}\xi_{j_i}\Big).
\end{align}
As aforementioned, the computational complexity is proportional to the number of evaluations of $Q$ functions. According to \eqref{cycle}, there are $MN^M$ $Q$ functions to be calculated.
Thus, the computational complexity of Gauss–Laguerre quadrature method is $O(MN^M)$. 
However, by noticing that \eqref{cycle} involves plenty of redundant calculations of the same value of $Q$ function, we resort to dynamic programming and cache the intermediate results that need to be re-calculated, eventually yields a lower computational complexity. 
\subsection{Dynamic Programming Based Method}\label{sec:dynamic}
By using the recursive relationship of $\phi_m(S_m)$ in \eqref{dp_G_L}, the average BLER can be calculated through dynamic programming. More specifically, $\phi_m(S_m)$ is iteratively updated from $m=M-1$ to $0$, where the values of $\mathcal Q_{m+1}(S_{m+1})$ and $\phi_{m+1}(S_{m+1})$ are obtained and cached to avoid redundant calculations prior to each iteration. For example, $\mathcal Q_2(S_1 + \xi_2)$ for $S_1 = \xi_1 $ is equal to $\mathcal Q_2(S_1 + \xi_1)$ for $S_1 = \xi_2 $ while applying \eqref{dp_G_L} to calculate $\phi_1(S_1)$. Likewise, there exists redundant calculations for updating $\phi_m(S_m)$. In order to conserve the computational overhead, we have to save the intermediate values of $\mathcal Q_m(S_m)$ and $\phi_m(S_m)$. Hence, dynamic programming is employed and the pseudocode of such an improved Gauss-Laguerre Quadrature method is outlined in Algorithm \ref{alg:algorithm1}.
\begin{algorithm}
    \caption{Dynamic Programming Based Gauss-Laguerre Quadrature}
    \label{alg:algorithm1}
    \begin{algorithmic}[1]
        \renewcommand{\algorithmicrequire}{\textbf{Initialization:}}
        \Require
        \State Get abscissae $\{\xi_i:\mathcal L_N(\xi_i)=0,i\in [1,N]\}$
        \State Get weights $\{w_i:i\in [1,N]\}$ via \eqref{cal_w} 
        \State Obtain sets $\mathbb S_m = \big\{S_m = \sum_{i=1}^m \xi_{j_i}: 1\leq j_{1} \le \cdots \leq j_{m}\leq N\big\}$ for $m = 1,\cdots,M$
        \State Set $S_0=0$ and $\phi_M(S_M)=1$ for $\forall S_M \in \mathbb S_M$
        \renewcommand{\algorithmicrequire}{\textbf{Dynamic Programming Enabled Calculation:}}
        \Require
        \For {$m=M-1$ to 0}
            \State Compute and cache $\mathcal Q_m(S_m)$ for $\forall S_m \in \mathbb S_m$
            \State Compute and cache $\phi_m(S_m)$ by \eqref{dp_G_L} for $\forall S_m \in \mathbb S_m$
        \EndFor
        \renewcommand{\algorithmicrequire}{\textbf{Output:}}
        \Require
        \State $\mathbb{\bar P}_M \approx \phi_0(S_0)$
    \end{algorithmic}
\end{algorithm}

In analogous to the complexity analysis of the former two numerical methods, the computational complexity of dynamic programming based algorithm is also linearly proportional to the number of evaluations of $Q$ function. The number of evaluations of $Q$ function is equal to
\begin{align}\label{dp_complexity}
       \sum_{m=1}^M{\rm card}(\mathbb S_m)
        =\sum_{m=1}^M\tbinom{m+{N}-1}{N-1}
        =\tbinom{M+{N}}{N}-1,
\end{align}
where ${\rm card}(\mathbb S)$ denotes the cardinality of set $\mathbb S$, $n\choose k$ refers to the binomial coefficient, and the first equality holds by using the combinational theory of ``stars and bars'', and the last equality holds by using \cite[eq. 0.15.1]{ref39}. 
Accordingly, the computational complexity of the dynamic programming method is expressed as $O\left(\tbinom{M+{N}}{N}\right)$.

To justify the superiority of the dynamic programming based method, we further investigate the asymptotic computational complexity under large $N$. The conclusion is drawn in the following corollary.


\begin{corollary}\label{cor1}
The computational complexity of the dynamic programming based method is asymptotically to $O(N^M/M!)$ as $N\to \infty$.
\end{corollary}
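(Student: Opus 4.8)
The plan is to start from the exact operation count already derived in \eqref{dp_complexity}, namely that the number of $Q$-function evaluations equals $\binom{M+N}{N}-1$, and to extract its leading-order behaviour when the number of HARQ rounds $M$ is held fixed and the quadrature order $N\to\infty$. The first step is to rewrite the binomial coefficient in a product form that is amenable to asymptotic analysis. Using $\binom{M+N}{N}=\binom{M+N}{M}=\frac{(N+M)!}{N!\,M!}$, the factorials telescope and leave
\begin{equation*}
\binom{M+N}{N}=\frac{1}{M!}\prod_{j=1}^{M}(N+j).
\end{equation*}

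Next I would observe that, for fixed $M$, the product $\prod_{j=1}^{M}(N+j)$ is a monic polynomial in $N$ of degree $M$, i.e. $\prod_{j=1}^{M}(N+j)=N^{M}+c_{M-1}N^{M-1}+\cdots+c_{0}$ for constants $c_{i}$ depending only on $M$ (their explicit values are immaterial). Hence $\prod_{j=1}^{M}(N+j)=N^{M}\bigl(1+O(1/N)\bigr)$, and neither dividing by the constant $M!$ nor subtracting the constant $1$ alters the leading term, so that
\begin{equation*}
\binom{M+N}{N}-1=\frac{N^{M}}{M!}\bigl(1+O(1/N)\bigr) .
\end{equation*}
This immediately shows that the number of $Q$-function evaluations, and therefore the computational complexity of the dynamic programming based method, behaves as $O(N^{M}/M!)$ as $N\to\infty$, which is the assertion of the corollary. (An equivalent route is to apply Stirling's formula to $(N+M)!$ and $N!$, but the telescoping product is cleaner and avoids error-term bookkeeping.)

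I do not anticipate a genuine obstacle here; the one point that deserves to be stated explicitly is that the asymptotics are taken with $M$ fixed, so that $M!$ and the lower-order polynomial coefficients are treated as constants independent of $N$. This is precisely the regime of interest, since one wishes to compare the complexities $O(M^{2}K)$, $O(MN^{M})$, and $O\!\left(\binom{M+N}{N}\right)$ of the three proposed methods for a prescribed number of HARQ rounds, while the earlier numerical experiments suggest taking $N$ at least $10$; the corollary then makes transparent that the dynamic programming method reduces the $O(MN^{M})$ cost of the naive Gauss–Laguerre evaluation in \eqref{cycle} by roughly a factor of $M\cdot M!$.
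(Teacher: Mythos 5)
Your proof is correct, but it takes a genuinely different route from the paper. You expand $\tbinom{M+N}{N}=\frac{1}{M!}\prod_{j=1}^{M}(N+j)$ and read off the leading term of the resulting monic degree-$M$ polynomial in $N$, whereas the paper applies Stirling's formula $n!\approx\sqrt{2\pi n}\,(n/e)^n$ to $(M+N)!$ and $N!$ and then invokes $\lim_{x\to\infty}(1+1/x)^x=e$ to collapse the residual factors. For fixed $M$ your telescoping-product argument is the more elementary and tighter one: it is exact up to a multiplicative $1+O(1/N)$ with no appeal to asymptotic approximations of the factorial, and you correctly flag that the whole statement presupposes $M$ fixed while $N\to\infty$ — a hypothesis the paper leaves implicit. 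The Stirling route buys nothing extra here except familiarity; if anything it introduces error-term bookkeeping that your approach sidesteps, exactly as you anticipated. One minor discrepancy worth noting: you state the speed-up over the $O(MN^M)$ Gauss--Laguerre cost as a factor of roughly $M\cdot M!$, which is the literal ratio $MN^M\big/(N^M/M!)$, while the paper quotes ``around $(M+1)!$''; the two agree only up to replacing $M$ by $M+1$, and your figure is the more accurate of the two. This does not affect the validity of the corollary itself.
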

\begin{proof}
Please refer to Appendix \ref{appendices:b}.
\end{proof}

On the basis of Corollary \ref{cor1}, the computational complexity of dynamic programming based method is proved to be lower than that of Gauss–Laguerre quadrature method by around $(M+1)!$ times.

\subsection{Asymptotic Method}\label{ASYMPTOTIC ANALYSIS}
Although the average BLER can be numerically evaluated with trapezoidal approximation or Gaussian quadrature, the expression of the average BLER is too complex to extract meaningful insights and the computational complexity is also prohibitively high. To address such challenges, this section aims at the asymptotic analysis of the average BLER at high SNR, i.e., $\bar \gamma_1,\cdots,\bar \gamma_M \to \infty$. According to  \eqref{eqn:f_cm_pdf}, it follows that $f_{C_m}(x) \approx \frac{\lambda^{-1}\ln2}{\bar \gamma_m}2^{x}$ as $\bar \gamma_m\to \infty$. By applying this approximation to \eqref{E_epsilon'_m1}, the average BLER $\mathbb{\bar P}_M$ can be expressed as \eqref{E_epsilon_m11} in a recursive manner, as shown at the top of this page.

\begin{figure*}
    \begin{equation}\label{E_epsilon_m11}
     \mathbb{\bar P}_M \approx    \frac{(\lambda^{-1}\ln2)^M}{\prod_{m=1}^M\bar \gamma_m} \int\nolimits_{S_0}^{+\infty}\underbrace{ \cdots \int\nolimits_{S_{M-2}}^{+\infty}\mathcal{Q}_{M-1}(S_{M-1}) \underbrace{\int\nolimits_{S_{M-1}}^{+\infty}\mathcal{Q}_M(S_M)\underbrace{2^{S_M}}_{\psi_M(S_M)}dS_M}_{\psi_{M-1}(S_{M-1})}dS_{M-1}}_{\cdots}\cdots dS_1.
    \end{equation}
    \hrulefill
\end{figure*}


To proceed with the asymptotic analysis, the linearization approximation of $Q$ function is leveraged to simplify \eqref{E_epsilon_m11}, as shown in the following lemma.
\begin{lemma}\label{th1}
The linearization approximation of $Q$ function yields the following recursive relationship
\begin{subequations}
    \begin{align}\label{high SNR}
    \tilde f_{m}(x)=&\int_{x}^{+\infty}\mathcal{Q}_{m+1}(t)\tilde  f_{m+1}(t)dt\\
    \approx& \frac{1}{2{V}_{m+1}}\int_{R-{V}_{m+1}}^{R+{V}_{m+1}}\tilde F_{m+1}(t)dt-\tilde F_{m+1}(x),\label{high SNR1}
    \end{align}
\end{subequations}
where $\tilde F_{m+1}(x) = \int_{0}^{x} \tilde f_{m+1}(t)dt$ and ${V}_{m} = \sqrt{\frac{m\pi}{2L}}\log_2 e$. 
\end{lemma}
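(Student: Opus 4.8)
The plan is to start from the definition of the recursive integral $\tilde f_m(x) = \int_x^{+\infty} \mathcal{Q}_{m+1}(t)\,\tilde f_{m+1}(t)\,dt$ and replace the $Q$ function $\mathcal{Q}_{m+1}(t) = Q\big((t-R)/(\sqrt{m+1}\,V)\big)$ by its piecewise-linear surrogate. Recall that the standard linearization of the Gaussian tail is $Q(z) \approx 1$ for $z \le -\sqrt{\pi/8}$, $Q(z) \approx 1/2 - z/\sqrt{2\pi}$ for $|z| \le \sqrt{\pi/8}$, and $Q(z) \approx 0$ for $z \ge \sqrt{\pi/8}$; this is the approximation whose slope $-1/\sqrt{2\pi}$ at the origin matches $Q'(0)$ and whose breakpoints are chosen so that the ramp connects $0$ and $1$ continuously. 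Substituting $z = (t-R)/(\sqrt{m+1}\,V)$ and writing $V_{m+1} = \sqrt{(m+1)\pi/2L}\,\log_2 e = \sqrt{\pi/8}\cdot\sqrt{m+1}\cdot\sqrt{4/L}\,\log_2 e$, one checks that $V_{m+1} = \sqrt{\pi/8}\,\sqrt{m+1}\,\cdot 2V$ — wait, more carefully: with $V=\sqrt{1/L}\log_2 e$, the breakpoint $z = \pm\sqrt{\pi/8}$ corresponds to $t = R \pm \sqrt{m+1}\,V\sqrt{\pi/8} = R \pm V_{m+1}$, which is exactly why $V_{m+1}=\sqrt{(m+1)\pi/2L}\log_2 e$ appears. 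So $\mathcal Q_{m+1}(t)$ becomes the clipped linear ramp that equals $1$ on $(-\infty, R-V_{m+1}]$, decreases linearly to $0$ across $[R-V_{m+1}, R+V_{m+1}]$ with slope $-1/(2V_{m+1})$, and equals $0$ thereafter.

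Next I would split the integral $\int_x^{+\infty}$ at the two breakpoints and integrate by parts on the ramp region. Concretely, write $\int_x^{+\infty}\mathcal Q_{m+1}(t)\tilde f_{m+1}(t)\,dt = \int_x^{+\infty}\tilde f_{m+1}(t)\,dt - \int_x^{+\infty}\big(1-\mathcal Q_{m+1}(t)\big)\tilde f_{m+1}(t)\,dt$. The first term is $\tilde F_{m+1}(\infty) - \tilde F_{m+1}(x)$. For the second term, $1 - \mathcal Q_{m+1}(t)$ is $0$ below $R-V_{m+1}$, a linear ramp from $0$ to $1$ on the middle interval, and $1$ above $R+V_{m+1}$; integrating by parts (moving the derivative off $\tilde F_{m+1}$, using that the ramp's derivative is the constant $1/(2V_{m+1})$ on the middle interval and $0$ elsewhere) collapses it to $\tilde F_{m+1}(\infty) - \frac{1}{2V_{m+1}}\int_{R-V_{m+1}}^{R+V_{m+1}}\tilde F_{m+1}(t)\,dt$. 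Subtracting yields precisely \eqref{high SNR1}. The boundary terms at $t=x$ and $t\to\infty$ need a short justification: $\tilde f_{m+1}$ is integrable (it is built from products of $Q$'s and the asymptotic $2^t$-type densities, and at the top layer $\tilde f_M$ reduces to $2^{S_M}$ truncated by $\mathcal Q_M$, which is bounded), and $\tilde F_{m+1}(\infty)$ is finite, so the cancellation of the $\tilde F_{m+1}(\infty)$ terms is legitimate.

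I expect the main obstacle to be bookkeeping rather than a deep step: getting the normalization constants right so that the breakpoints land exactly at $R\pm V_{m+1}$ and the ramp slope is exactly $1/(2V_{m+1})$, and confirming that the integration-by-parts boundary terms genuinely vanish or cancel (in particular that $x$ is meant to range over $[0,\infty)$ or at least where $\tilde F_{m+1}(x)$ is the relevant object, and that we do not pick up a stray $\tilde F_{m+1}$ evaluated outside its domain). A secondary point worth stating is the base case: the recursion must be anchored at $m = M$ with $\tilde f_M(x)$ the (asymptotic) density $2^{x}$ — equivalently $\tilde f_M \equiv \psi_M$ in the notation of \eqref{E_epsilon_m11} — so that the claimed recursion is well-posed and terminates. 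Once the linear surrogate is plugged in and the elementary integrals are evaluated, no further approximation is introduced, so the lemma follows directly.
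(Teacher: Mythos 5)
Your proposal is essentially the paper's own argument: substitute the piecewise-linear surrogate \eqref{linear} for $\mathcal Q_{m+1}$, split the integral at the breakpoints $R\pm V_{m+1}$, and use integration by parts to express everything through $\tilde F_{m+1}$. The only organizational difference is that the paper integrates by parts on $\int_{R-V_{m+1}}^{R+V_{m+1}} t\,\tilde f_{m+1}(t)\,dt$ directly (its \eqref{eqn:last_ter}), whereas you write $\mathcal Q_{m+1}=1-(1-\mathcal Q_{m+1})$ and integrate by parts on the complementary ramp; both routes cancel the boundary evaluations at $R\pm V_{m+1}$ and land on \eqref{high SNR1}. Your identification of the breakpoints and the slope $1/(2V_{m+1})$ is correct, and, like the paper, you implicitly need $x\le R-V_{m+1}$ so that the ramp region is fully contained in the integration range.

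One justification in your writeup is, however, false and should be repaired. In this asymptotic recursion $\tilde f_{m+1}$ is \emph{not} integrable on $[0,\infty)$: the anchor is $\tilde f_M(t)=\psi_M(t)=2^{t}$ (it is not truncated by $\mathcal Q_M$ at this stage; the truncation only enters inside the integral defining $\tilde f_{M-1}$), and by Theorem \ref{the:phi_m} every $\tilde f_{m+1}=\psi_{m+1}$ retains an exponentially growing term $(-\ln 2)^{m+1-M}2^{t}$. Hence $\tilde F_{m+1}(\infty)=+\infty$, and your decomposition $\int_x^{\infty}\mathcal Q_{m+1}\tilde f_{m+1}=\int_x^{\infty}\tilde f_{m+1}-\int_x^{\infty}(1-\mathcal Q_{m+1})\tilde f_{m+1}$ is an $\infty-\infty$ split as written. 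The fix is immediate: under the linearization $\mathcal Q_{m+1}(t)=0$ for $t\ge R+V_{m+1}$, so the upper limit is really $R+V_{m+1}$; carrying out the same two-term decomposition and integration by parts on the finite interval $[x,\,R+V_{m+1}]$ produces $\tilde F_{m+1}(R+V_{m+1})$ in place of $\tilde F_{m+1}(\infty)$, these finite terms cancel, and \eqref{high SNR1} follows exactly as you computed. With that correction the proof is complete and matches the paper's.
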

\begin{proof}[Proof]
Please refer to Appendix \ref{appendices:a}. 
\end{proof}
On the basis of Lemma \ref{th1} and $\psi_{M}(S_{M})=2^{S_M}$, $\psi_{M-1}(S_{M-1})$ can be obtained as
\begin{align}
    \psi_{M-1}(S_{M-1}) &= \int_{S_{M-1}}^{+\infty}\mathcal{Q}_{m+1}(S_M)\psi_M(S_M)dt\notag\\
    &=\frac{2^{R+V_M}-2^{R-V_M}}{(\ln 2)^2} - \frac{2^{S_{M-1}}}{\ln2}.
\end{align}
Accordingly, it can be proved by induction that $\psi_m(S_m)$ can be expressed in a general form as given by the following theorem.
\begin{theorem}\label{the:phi_m}
$\psi_{m}(S_m)$ is obtained by
  \begin{equation}
        \psi_{m}(S_m) = \sum_{i=0}^{M-m}\alpha_{m,i}{(S_m)}^{i} + (-\ln2)^{m-M}2^{S_m},\, m\ge 0,
\end{equation}
where $\alpha_{M,0}=0$, $\alpha_{m,0}$ for $0\leq m<M$ is given by \eqref{eqn:alpha_def}, as shown at the top of the next page,
\begin{figure*}
   \begin{align}\label{eqn:alpha_def}
    \alpha_{m,0} = \frac{1}{2V_{m+1}}\bigg(-(-\ln2)^{m-M}(2^{R+ V_{m+1}}-2^{R-V_{m+1}})+\sum\nolimits_{i=0}^{M-m-1}\alpha_{m+1,i}\frac{(R+V_{m+1})^{i+1}-(R-V_{m+1})^{i+1}}{i+1}\bigg),
\end{align}
    \hrulefill
\end{figure*}
and $\alpha_{m,i}=-\alpha_{m+1,i-1}/i$ if $1 \leq i \leq M-m $. 
\end{theorem}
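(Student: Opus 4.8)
The plan is to prove the formula by backward induction on $m$, running from $m=M$ down to $m=0$, using the recursion $\psi_m(S_m)=\int_{S_m}^{+\infty}\mathcal{Q}_{m+1}(S_{m+1})\psi_{m+1}(S_{m+1})\,dS_{m+1}$ that is read off from \eqref{E_epsilon_m11} (so that $\psi_m$ is exactly the $\tilde f_m$ of Lemma \ref{th1}), together with the linearized recursion of Lemma \ref{th1}. The base case $m=M$ is immediate: $\psi_M(S_M)=2^{S_M}$, and the claimed expression collapses to $\alpha_{M,0}+(-\ln 2)^{0}2^{S_M}$, so the stipulation $\alpha_{M,0}=0$ makes the two sides coincide. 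The paragraph preceding the theorem already carries out this step once more for $m=M-1$, which is a useful sanity check.

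For the inductive step, assume $\psi_{m+1}(S_{m+1})=\sum_{i=0}^{M-m-1}\alpha_{m+1,i}S_{m+1}^{i}+(-\ln 2)^{m+1-M}2^{S_{m+1}}$. First I would form the antiderivative $\tilde F_{m+1}(x)=\int_0^{x}\psi_{m+1}(t)\,dt$ required by Lemma \ref{th1}, integrating term by term: each monomial $\alpha_{m+1,i}t^{i}$ contributes $\frac{\alpha_{m+1,i}}{i+1}x^{i+1}$, and the exponential contributes $\frac{(-\ln 2)^{m+1-M}}{\ln 2}(2^{x}-1)$, which using the identity $\frac{(-\ln 2)^{m+1-M}}{\ln 2}=-(-\ln 2)^{m-M}$ becomes $-(-\ln 2)^{m-M}2^{x}+(-\ln 2)^{m-M}$. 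Substituting this $\tilde F_{m+1}$ into the recursion of Lemma \ref{th1} gives $\psi_m(x)=\frac{1}{2V_{m+1}}\int_{R-V_{m+1}}^{R+V_{m+1}}\tilde F_{m+1}(t)\,dt-\tilde F_{m+1}(x)$.

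The rest is to read off coefficients. The definite integral $\int_{R-V_{m+1}}^{R+V_{m+1}}\tilde F_{m+1}(t)\,dt$ is a pure number, so after dividing by $2V_{m+1}$ it lands entirely in the constant term of $\psi_m$; evaluating it term by term produces a weighted sum of differences $(R+V_{m+1})^{j}-(R-V_{m+1})^{j}$ with weights built from $\alpha_{m+1,i}$, plus an exponential piece proportional to $2^{R+V_{m+1}}-2^{R-V_{m+1}}$, which is precisely the combination appearing in \eqref{eqn:alpha_def}. The term $-\tilde F_{m+1}(x)$ supplies $+(-\ln 2)^{m-M}2^{x}$ — exactly the exponential part asserted for $\psi_m$ — together with the polynomial $-\sum_{i=0}^{M-m-1}\frac{\alpha_{m+1,i}}{i+1}x^{i+1}$; re-indexing $j=i+1$ rewrites this as $-\sum_{j=1}^{M-m}\frac{\alpha_{m+1,j-1}}{j}x^{j}$, so $\alpha_{m,j}=-\alpha_{m+1,j-1}/j$ for $1\le j\le M-m$ and the polynomial degree grows from $M-m-1$ to $M-m$, as claimed. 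Finally, collecting the residual constants (the $(-\ln 2)^{m-M}$ coming from $-\tilde F_{m+1}$ plus the $\frac{1}{2V_{m+1}}$-weighted definite integral) yields $\alpha_{m,0}$ in the form \eqref{eqn:alpha_def}, which closes the induction.

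Each step is elementary — it is just repeated integration of polynomials and of $2^{t}$ — so I do not anticipate a conceptual obstacle. The main effort, and the likeliest place to slip, is the bookkeeping of the constant terms: keeping the powers $(-\ln 2)^{m-M}$ straight across the shift $m+1\mapsto m$, correctly evaluating $\int_{R-V_{m+1}}^{R+V_{m+1}}$ of each monomial and of $2^{t}$, and matching every sign against \eqref{eqn:alpha_def}. One point worth flagging explicitly is the tacit assumption, inherited from the integration by parts underlying Lemma \ref{th1}, that the evaluation point $S_m$ lies in the range where the linearized $\mathcal{Q}_{m+1}$ equals $1$, i.e.\ $S_m\le R-V_{m+1}$, so that the boundary term collapses to $\tilde F_{m+1}(S_m)$; this is legitimate in the high-SNR asymptotic regime that \eqref{E_epsilon_m11} already presupposes.
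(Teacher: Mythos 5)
Your overall strategy—backward induction from $m=M$ using the recursion read off \eqref{E_epsilon_m11} together with Lemma \ref{th1}—is exactly the route the paper takes (its proof is a one-line "by induction with Lemma \ref{th1}, details omitted"), and most of your bookkeeping is sound: the base case, the propagation of the exponential term $(-\ln 2)^{m-M}2^{S_m}$ via $(-\ln 2)^{m+1-M}/\ln 2=-(-\ln 2)^{m-M}$, and the coefficient recursion $\alpha_{m,j}=-\alpha_{m+1,j-1}/j$ coming from $-\tilde F_{m+1}(x)$ all check out. The gap is precisely at the spot you flagged as "the likeliest place to slip": you assert, without computing it, that $\frac{1}{2V_{m+1}}\int_{R-V_{m+1}}^{R+V_{m+1}}\tilde F_{m+1}(t)\,dt$ plus the residual constant from $-\tilde F_{m+1}(x)$ "is precisely the combination appearing in \eqref{eqn:alpha_def}". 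It is not. Since $\tilde F_{m+1}$ contains the monomials $\frac{\alpha_{m+1,i}}{i+1}t^{i+1}$, integrating it over $[R-V_{m+1},R+V_{m+1}]$ produces differences of the powers $i+2$ with weights $\frac{\alpha_{m+1,i}}{(i+1)(i+2)}$, not differences of the powers $i+1$ with weights $\frac{\alpha_{m+1,i}}{i+1}$ as in \eqref{eqn:alpha_def}; the exponential piece acquires an extra factor $1/\ln 2$; and the constant $(-\ln 2)^{m-M}$ inside $\tilde F_{m+1}$ integrates to $2V_{m+1}(-\ln 2)^{m-M}$, which after division by $2V_{m+1}$ cancels the $-(-\ln 2)^{m-M}$ you keep from $-\tilde F_{m+1}(x)$. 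Carrying the computation through, Lemma \ref{th1} actually yields
\begin{align*}
\alpha_{m,0}=\frac{1}{2V_{m+1}}\Bigg(-\frac{(-\ln 2)^{m-M}}{\ln 2}\big(2^{R+V_{m+1}}-2^{R-V_{m+1}}\big)+\sum_{i=0}^{M-m-1}\frac{\alpha_{m+1,i}\big((R+V_{m+1})^{i+2}-(R-V_{m+1})^{i+2}\big)}{(i+1)(i+2)}\Bigg),
\end{align*}
whereas \eqref{eqn:alpha_def} equals $\frac{1}{2V_{m+1}}\big(\tilde F_{m+1}(R+V_{m+1})-\tilde F_{m+1}(R-V_{m+1})\big)=\frac{1}{2V_{m+1}}\int_{R-V_{m+1}}^{R+V_{m+1}}\psi_{m+1}(t)\,dt$, i.e., the integral of $\tilde f_{m+1}$ rather than of $\tilde F_{m+1}$.

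You can confirm the discrepancy at $m=M-1$: a direct evaluation of $\int_{S}^{\infty}\mathcal Q_M(t)2^t\,dt$ under the linearization \eqref{linear} gives the constant term $\frac{2^{R+V_M}-2^{R-V_M}}{2V_M(\ln 2)^2}$, which agrees with the boxed expression above but not with \eqref{eqn:alpha_def} (which gives $\frac{2^{R+V_M}-2^{R-V_M}}{2V_M\ln 2}$), nor incidentally with the paper's own displayed $\psi_{M-1}$ (which drops the factor $\frac{1}{2V_M}$ altogether). So either \eqref{eqn:alpha_def} contains a typo or your induction does not close as written; in either case the step "which is precisely the combination appearing in \eqref{eqn:alpha_def}" is the point that fails and must be replaced by the explicit evaluation of $\int_{R-V_{m+1}}^{R+V_{m+1}}\tilde F_{m+1}(t)\,dt$ term by term.
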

\begin{proof}
  With Lemma \ref{th1}, \eqref{eqn:alpha_def}  can be proved by induction. The detailed proof is omitted here to save space.
\end{proof}

By applying Theorem \ref{the:phi_m} to \eqref{E_epsilon_m11}, the average BLER at high SNR can be derived as
\begin{align}\label{Asy Con}
    \mathbb{\bar P}_M
    \simeq& \frac{(\lambda^{-1}\ln2)^M}{\prod_{m=1}^M\bar \gamma_m} \psi_{0}(S_0)\notag\\
    =&\frac{(\lambda^{-1}\ln2)^M}{\prod_{m=1}^M\bar \gamma_m}\Big(\alpha_{0,0} + (-\ln2)^{-M}\Big)\notag\\
    =&\mathcal{G}_M\prod\nolimits_{m=1}^{M}\bar \gamma_m^{-1},
\end{align}
where $\mathcal{G}_M= \lambda^{-M}\left((\ln2)^M\alpha_{0,0}+(-1)^M\right)$. Since it is unnecessary to evaluate $Q$ function, the computational complexity of the asymptotic method is $O(1)$. Besides, the simple form of the asymptotic average BLER significantly facilitates the optimal design of HARQ-IR assisted short packet communications.

\section{Maximization of LTAT}\label{Throughput Optimization}

Due to the restricted battery capacity at the mobile terminals, a proper power allocation scheme is of necessity to realize green communications. 
In this section, we focus on the power allocation across different HARQ rounds for HARQ-IR aided short packet communications by solely relying on the statistical information of channel states. More specifically, the LTAT is maximized while ensuring low average power consumption and stringent average BLER constraint. In the following, the problem of LTAT maximization is first formulated and two different methods are developed to solve it.


\subsection{Problem Formulation}
The LTAT $\eta$ is defined as the average throughput in a long time period precisely given by \cite{ref42} 
\begin{equation}
    \begin{aligned}\label{Generally Throughput}
        \eta = \frac{R(1 - \mathbb{\bar P}_M)}{1 + \sum_{m=1}^{M-1}\mathbb{\bar P}_m},
    \end{aligned}
\end{equation}
Accordingly, the LTAT is maximized through the optimal power allocation by considering the constraints of the average transmission power and the average BLER, which mathematically reads as
\begin{align}\label{throughput_maximization}
    \max_{P_1,\cdots,P_M}&\quad \eta \notag\\
    \rm{s.t.}&\quad P_{avg} \leq \bar P \notag\\
    &\quad \mathbb{\bar P}_M \leq \mathbb{\bar P},
\end{align}
where $P_{avg} = \sum_{m=1}^M P_m \mathbb{\bar P}_{m-1}$ denotes the average total transmission power for delivering each message, $\bar P$ and $\mathbb{\bar P}$ denote the maximum allowable average power and average BLER, respectively. Due to the fractional objective function and non-convex feasible set, \eqref{throughput_maximization} cannot be directly solved. To address this issue, two different methods are developed in the following to transform \eqref{throughput_maximization} into a convex problem and a constrained MDP, which can be solved with geometric programming (GP) and deep reinforcement learning (DRL), respectively.
\subsection{GP-Based Method}

By noticing the simplicity of the asymptotic BLER in \eqref{Asy Con}, the asymptotic BLER is leveraged to convert \eqref{throughput_maximization} into a fractional programming problem which can be solved with GP-based method. More specifically, by taking into account that the maximum tolerable BLER after $M$ rounds is extremely low in practice, i.e., $\mathbb {\bar P}_M \ll 1$, the LTAT can be approximated as  $\eta \approx {R}/({1 + \sum_{m=1}^{M-1}\mathbb{\bar P}_m})$. Then we replace the average BLER $\mathbb {\bar P}_M$ in \eqref{throughput_maximization} with the asymptotic expression of BLER in \eqref{Asy Con}, we thus have
\begin{equation}
    \begin{aligned}
        \begin{split}\label{Optimization problem}
            \max_{P_1,\cdots P_M}&\quad \frac{R}{1+\sum_{m=1}^{M-1}\mathcal{G}_{m}\prod_{i=1}^{m}P_i^{-1}}\\
            \rm{s.t.}&\quad \sum_{m=1}^{M}P_m\mathcal G_{m-1}\prod_{i=1}^{m-1}P_i^{-1}\leq \bar P\\
            &\quad \mathcal{G}_M\prod_{m=1}^MP_m^{-1}\leq \mathbb{\bar P},
        \end{split}
    \end{aligned}
\end{equation}
where $\mathcal G_0 = 1$ and the average power of noise is normalized to unity, i.e., $\mathcal N_0 = 1$, for notational simplicity.

Since the numerator of the above objective function is independent of $R$, \eqref{Optimization problem} is equivalent to the following minimization problem as
\begin{equation}\label{Optimizationproblem11}
    \begin{aligned}
        \begin{split}
            \min_{P_1,\cdots P_M}&\quad \sum_{m=1}^{M-1}\mathcal{G}_m\prod_{i=1}^m P_i^{-1}\\
            \rm{s.t.}&\quad \sum_{m=1}^{M}\mathcal G_{m-1} P_m\prod_{i=1}^{m-1}P_i^{-1}  \leq \bar P \\
            &\quad \mathcal{G}_M \prod_{m=1}^MP_m^{-1} \leq \mathbb{\bar P},
        \end{split}
    \end{aligned}
\end{equation}

Clearly, \eqref{Optimizationproblem11} is a GP problem that can be easily converted into a convex problem. There are plenty of readily available software packages for solving GP problem, such as Matlab toolbox ``GGPLAB''. The globally optimal solution to \eqref{Optimizationproblem11} can be numerically computed.

\subsection{DRL-Based Method}
However, by noticing that the asymptotic results are only valid at high SNR, the asymptotic BLER is greatly larger than the exact value. Hence, the optimal solution offered by GP method significantly underestimates the throughput performance. To combat this shortcoming, the optimization problem is formulated as a constrained MDP which can be solved with DRL-based method. In doing so,
we first transform \eqref{throughput_maximization} into an equivalent unconstrained Lagrange duality problem \cite{ref41}
\begin{align}\label{Lagrange_problem}
    \min_{\rho, \nu} \max_{P_1,\cdots,P_M} \ell,
\end{align}
where the dual variables $\rho,~ \nu\geq 0$, and
\begin{align}\label{Lagrange}
    \ell = &\eta + \rho(\bar P - P_{avg})+\nu(\mathbb{\bar P} - \mathbb{\bar P}_M).
\end{align}
To proceed with the optimization, the dual problem \eqref{Lagrange_problem} is reformulated as a MDP problem.

\subsubsection{Problem Reformulation}
To enable the reformulation of \eqref{Lagrange_problem} as a MDP, each term in the objective function of \eqref{Lagrange_problem} should be written as an accumulative reward after $T$ time steps.
To this end, we assume HARQ transmission procedure to be an ergodic processes similarly to \cite{ref43}. With regard to $\eta$, the LTAT defined in \eqref{Generally Throughput} can be equivalently expressed as
\begin{equation}
    \begin{aligned}\label{LATA}
        \eta = \mathbb{E}\left[\frac{1}{T}\sum_{t=1}^T \mathcal R(t)\right],
    \end{aligned}
\end{equation}
where $T$ denotes the total number of HARQ rounds and $\mathcal R(t)$ is the effective transmission rate in the $t$-th HARQ round. $\mathcal R(t) = R$ if the message is successfully recovered, and $\mathcal R(t) = 0$ otherwise. Moreover, with regard to the average BLER after $ m$ rounds, $\bar{\mathbb{P}}_m$ can be rewritten as
\begin{align}\label{LTA-BLER}
    \bar{\mathbb{P}}_m & = \frac{{\rm The~number~of~failed~messages~after~}m\rm ~rounds}{\rm The~number~of~delivered~messages} \notag\\
    &= \mathbb{E}\left[\frac{\sum_{t=1}^T\mathbb{I}\left(\mathcal{F}_{\frak n(t)=m}\right)}{T/\bar M} \right],\,m \ge 1,
\end{align}
where $\mathbb{I}(\cdot)$ refers to the indicator function, $\frak n(t)$ denotes the mapping from the time slot $t$ to the current HARQ round, 
$\mathcal{F}_{\frak n(t)=m}$ denotes the occurrence of failure at time slot $t$ together with $\frak n(t)=m$, and $\bar M$ stands for the average number of HARQ rounds for delivering one message.

Furthermore, with regard to $P_{avg}$, the total average transmission power $P_{avg}$ can be expressed in terms of time-average as
\begin{align}\label{LTA-avg}
    P_{avg} & = \frac{{\rm The~total~transmission~power}}{\rm The~number~of~delivered~messages} \notag\\
    &= \mathbb{E}\left[\frac{\sum_{t=1}^T \mathcal P(t)}{T/\bar M} \right],
\end{align}
where $\mathcal P(t)$ represents the transmission power in time slot $t$.

By substituting  \eqref{LATA} and \eqref{LTA-BLER} into \eqref{Lagrange}, the objective function $\ell$ can be written as \eqref{MDP-Subgradient11}, as shown at the top of this page. 
\begin{figure*}
    \begin{align}\label{MDP-Subgradient11}
    \ell =&\mathbb{E}\left[\frac{1}{T}\sum_{t=1}^T \mathcal R(t) + \rho\Bigg(\bar P - \frac{1}{T}\sum_{t=1}^T \bar M\mathcal P(t) \Bigg)+ \nu\Bigg(\mathbb{\bar P}- \frac{1}{T}\sum_{t=1}^T\bar M\mathbb{I}\left(\mathcal{F}_{\frak n(t)=M}\right)\Bigg)\right]\notag\\
    =&\mathbb{E}\left[\frac{1}{T}\sum_{t=1}^T\left( \mathcal R(t)+ \rho\left(\bar P - \bar M\mathcal P(t)\right) + \nu\left(\mathbb{\bar P} - \bar M\mathbb{I}\left(\mathcal{F}_{\frak n(t)=M}\right)\right)\right)\right].
\end{align}
\hrulefill
\end{figure*}
With the reformulated objective function $\ell$, the power allocation policy of the inner optimization problem in \eqref{MDP-Subgradient11} can be modeled as an MDP problem which can be solved by invoking deep reinforcement learning (DRL) method.

To be more specific, a MDP is fully characterized by four elements, including environment, state space $\mathbb S$, action space $\mathbb A$, and reward space $\mathbb R$. According to the mechanism of HARQ-IR-aided short packet communications, the transmitter plays as an agent to select a proper transmission power for each time slot. It is noteworthy that the decision of allocated power is made according to the current state $\bm s_t$. After taking this decision, a new state $\bm s_{t+1}$ will be observed together with a reward $r_t$ returned by the environment.
To reformulate the power allocation problem of HARQ-IR-aided short packet communications as an MDP, the states, actions, and rewards are devised as follows:
\begin{itemize}
\item \textbf{Action} ${a}_t$: The action is defined as the transmission power ${a}_t = P(t) \in \mathbb A^{1\times 1}$ for the current HARQ round according to the current state ${s}_t$.

\item \textbf{State} ${s}_t$: 
Since the selection of the transmission power for state $s_t$ is related to the allocated powers in the prior HARQ rounds, the state $s_t$ is defined as 
${s}_t = (\mathcal P({t+1-\frak n(t)}),\cdots, \mathcal P({t-1}),0,\cdots,0) \in \mathbb S^{1\times (M-1)}$.  Moreover, if $\frak n(t)=1$, we stipulate ${s}_t = (0,0,\cdots,0)$. Besides, it is worth emphasizing that only statistical channel state information (CSI) is available at the transmitter, instantaneous CSI cannot be acquired prior to transmissions. Accordingly, $s_t$ does not include channel states.

\item \textbf{Reward} ${r}_t$: After executing action ${a}_t$, a reward ${r}_t \in \mathbb R^{1\times 1}$ will be received from the environment. According to \eqref{MDP-Subgradient11}, the reward ${r}_t$ can be set as
\begin{align}\label{reward}
    {r}_t = 
    \mathcal R(t)+ \rho(\bar P - \bar M\mathcal P(t)) + \nu\left(\mathbb{\bar P} - \bar M\mathbb{I}\left(\mathcal{F}_{\frak n(t)=M}\right)\right).
\end{align}
Clearly, the value of $\bar M$ depends on the power allocation policy. Accordingly, $\bar M$ should be updated periodically during training. More detailed discussions are deferred to Subsection \ref{sec:update}.


\end{itemize}

\begin{figure*}[h]
\centerline{\includegraphics[width=1\linewidth]{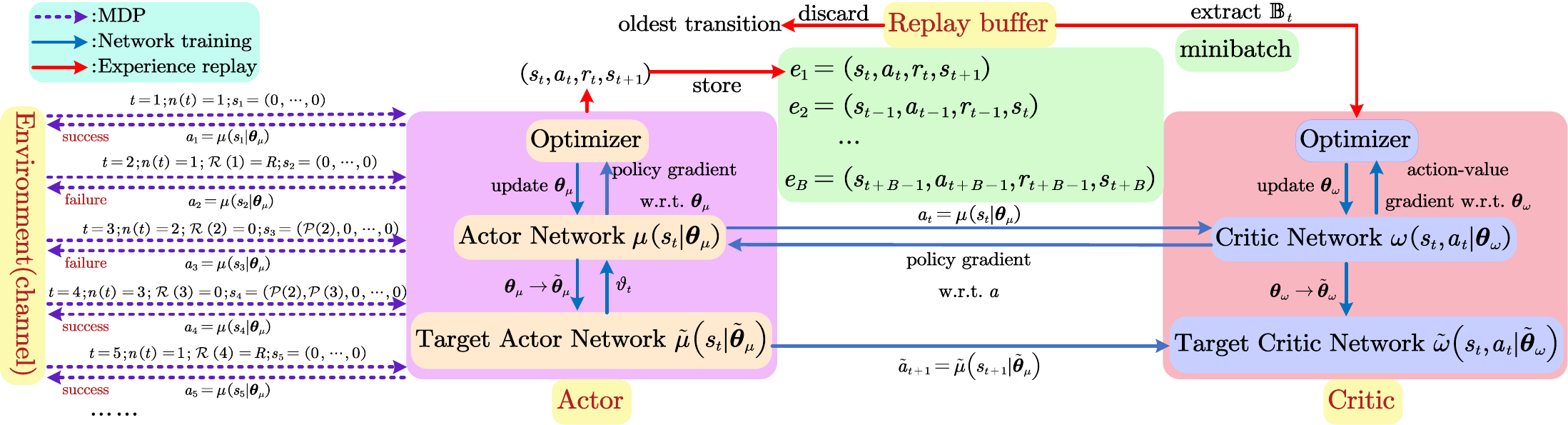}}
\caption{The DDPG network for power allocation of HARQ-IR-aided short packet communications.}
\label{DRL_process}
\end{figure*}
\subsubsection{DDPG}
By considering the continuity of the action space $\mathbb A$ and the state space $\mathbb S$, a DRL algorithm based on DDPG proposed in \cite{ref14} can be employed to solve the formulated MDP problem. More specifically, the DDPG algorithm is essentially developed based on the actor-critic framework, which comprises a actor network $\mu(s_t|{\bm \theta}_\mu)$, a critic network $\omega(s_t,a_t|{\bm \theta}_\omega)$, two target networks, and a experience replay buffer. For illustration, the DDPG-based power allocation scheme for HARQ-IR-aided short packet communications is shown in Fig. \ref{DRL_process}.
In Fig. \ref{DRL_process}, the actor network $\mu(s_t|{\bm \theta}_\mu)$, parameterized by ${\bm \theta}_\mu$, is used to output an action $a_t$ according to the current state $s_t$, and the critic network $\omega(s_t,a_t|{\bm \theta}_\omega)$, parameterized by ${\bm \theta}_\omega$, is adopted to estimate the Q-values for the output state-action pair $(s_t,a_t)$. In addition, the two target networks consists of a target actor network $\tilde \mu(s_t|\tilde {\bm \theta}_\mu)$ and a target critic network $\tilde \omega(s_t,a_t|\tilde {\bm \theta}_\omega)$, and both target networks are employed to stabilize the learning procedure by offering consistent target values. Besides, the replay buffer is leveraged to store experiences for training. In the sequel, the sampling strategy of replay buffer and the parameter training of the four networks are described in detail.



\begin{itemize}
\item \textbf{Replay Buffer}: In each time slot $t$, the current state $s_t$, the chosen action $a_t$, the received reward $r_t$, and the next state $s_{t+1}$ constitute a transition $e_t = (s_t,a_t,r_t,s_{t+1})$ that will be stored as an experience in the replay buffer with a finite size $B$. Moreover, the oldest transition is discarded and the currently generated transition is added when the replay buffer is full. In order to stabilize as well as accelerate the learning process, we capitalize on the prioritized sampling strategy for the training of the four networks \cite{ref44}.
More specifically, a minibatch of $\mathcal B$ transitions are extracted from the prioritized replay buffer in each training step, wherein each transition $e_i = (s_i,a_i,r_i,s_{i+1})$ is sampled with a probability $p_i$. The sampling probability $p_i$ is precisely given by \cite{ref44}
\begin{equation}
    p_i \propto |\vartheta_i| + \varepsilon,
\end{equation}
where $\varepsilon$ is positive to avoid a null sampling probability, $\vartheta_i$ refers to the temporal difference (TD) error given by
\begin{align}
      \vartheta_i = \omega(s_{i},a_i|{\bm \theta}_{\omega}) -r_i - \zeta \tilde\omega(s_{i+1}, \tilde{\mu}(s_{i+1}|\tilde{\bm \theta}_\mu)|\tilde{\bm \theta}_\omega),
\end{align}
and $\zeta$ denotes discount factor.


\item \textbf{Critic Network}: 
The parameters $\bm\theta_\omega$ of the critic network $\omega(s_t,a_t|{\bm \theta}_\omega)$ can be updated by the TD algorithm. Specifically, 
the loss function of training $\omega(s_t,a_t|{\bm \theta}_\omega)$ is denoted by the weighted squared TD error averaging over the sampled minibatch $\mathbb B_t$, i.e.,
    \begin{align}
      \pounds({\bm \theta}_\omega) = \frac{1}{2\mathcal{B}}\sum_{e_i\in\mathbb {B}_t}\varphi_i\vartheta_i^2,
    \end{align}
    where $\varphi_i\propto(\mathcal B p_i)^{-\beta}$ is the importance-sampling weight and $\beta\in[0,1]$ \cite{ref44}. Accordingly, by invoking gradient descent algorithm, ${\bm \theta}_\omega$ can be updated as
    \begin{align}
      \bm \theta_{\omega}^{\rm new} = \bm \theta_{\omega}^{\rm now} - \tau_{\omega}\nabla_{\bm \theta_{\omega}} \pounds({\bm \theta}_\omega^{\rm now}),
    \end{align}
    where $\tau_{\omega}$ is the learning rate and the gradient of loss function $\nabla_{\bm \theta_{\omega}}\pounds({\bm \theta}_\omega)$ is
    \begin{align}\label{theta_omega1}
      \nabla_{\bm \theta_{\omega}} \pounds({\bm \theta}_\omega) = \frac{1}{\mathcal B}\sum_{e_i \in \mathbb{B}_t}\varphi_i\vartheta_i\nabla_{\bm{\theta}_\omega}\omega (s_i,a_i|{\bm \theta}_\omega).
    \end{align}
\item \textbf{Actor Network}: 
The parameters ${\bm \theta}_\mu$ of the actor network $\mu(\cdot|{\bm \theta}_\mu)$ are similarly updated by using gradient descent algorithm according to the score of the current action policy, i.e., $J(\bm{\theta}_\mu)$. More precisely, $J(\bm{\theta}_\mu)$ is given by \cite{ref45} 
     \begin{align}
        J(\bm{\theta}_\mu) = \frac{1}{\mathcal B}\sum_{e_i\in \mathbb B_t} \omega(s_i, \mu(s_i|{\bm \theta}_\mu)|{\bm \theta}_{\omega}).
    \end{align}
It is noteworthy that a higher score $J(\bm{\theta}_\mu)$ of action indicates a better action policy. To achieve the best action policy, the parameters ${\bm \theta}_\mu$ must be optimized through maximizing $J(\bm{\theta}_\mu)$. Therefore, applying the gradient ascendant method yields the updated ${\bm \theta}_{\mu}$ as
    \begin{align}\label{theta_mu1}
        \bm \theta_{\mu}^{\rm new} = \bm \theta_{\mu}^{\rm now} + \tau_{\mu}\nabla_{\bm \theta_{\mu}} J({\bm \theta}_\mu^{\rm now}),
    \end{align}
    where $\tau_\mu$ is the learning rate and $\nabla_{\bm{\theta}_\mu} J(\bm{\theta}_\mu)$ is given by \cite{ref14}
    \begin{align}
        \nabla_{\bm{\theta}_\mu} J(\bm{\theta}_\mu) = \frac{1}{\mathcal{B}}\sum_{e_i\in\mathbb B_t}\nabla_{{\bm \theta}_{\mu}}\mu(s_i|{\bm \theta}_{\mu}) \nabla_{a}\omega(s_i,\mu(s_i|{\bm \theta}_{\mu})|{\bm \theta}_{\omega}).
    \end{align}
\item \textbf{Target Networks}: To ensure the stability of updates, the soft update strategy is leveraged to update both target networks. Therefore, the network parameters $\tilde{\bm \theta}_\omega$ and $ \tilde{\bm \theta}_\mu$ can be respectively updated by
    \begin{align}\label{theta_omega2}
    \tilde{\bm \theta}_\omega^{\rm new} =& \varrho \bm \theta_\omega^{\rm new} + (1-\varrho) \tilde{\bm \theta}_\omega^{\rm now},
    \end{align}
    \begin{align}\label{theta_mu2}
    \tilde{\bm \theta}_\mu^{\rm new} =& \varrho \bm \theta_\mu^{\rm new} + (1-\varrho) \tilde{\bm \theta}_\mu^{\rm now},
    \end{align}
    where $\varrho \ll 1$ is a hyperparameter.
\end{itemize}

\subsubsection{Updates of $\rho$, $\nu$, and $\bar M$}\label{sec:update}
With regard to the outer minimization problem in \eqref{MDP-Subgradient11}, the subgradient method \cite{ref45} is applied to update the dual variables $\rho$ and $\nu$. More specifically, $\rho$ and $\nu$ can be updated respectively by
\begin{align}\label{rho update}
    \rho^{\rm new} = \left[\rho^{\rm now}-\tau_\rho \left(\bar P - \frac{\sum_{t=1}^T \mathcal P(t)}{T/\bar M}\right)\right]^+,
\end{align}
\begin{align}\label{nu update}
    \nu^{\rm new} = \left[\nu^{\rm now} - \tau_\nu \left(\mathbb{\bar P}-\frac{\sum_{t=1}^{T}\mathbb{I}(\mathcal F_{\frak n(t)=M})}{T/\bar M}\right)\right]^+,
\end{align}
where $[x]^+=\max\{x,0\}$, $\tau_\rho $ and $ \tau_\nu$ represent the step sizes.

Besides, it is worthwhile to highlight that the average number of transmissions, i.e., $\bar M$, in \eqref{reward}, \eqref{rho update}, and \eqref{nu update} should be periodically updated to continuously approach the actual value of $\bar M$ for the current power allocation policy.
According to the definition of the average number of HARQ rounds in \cite{ref42} together with \eqref{LTA-BLER}, $\bar M$ can be expressed as
\begin{align}\label{eqn:M_bar_def}
    \bar M =& 1 + \sum_{m=1}^{M-1} \bar{\mathbb{P}}_m
    = 1 + \frac{\sum_{t=1}^{T}\mathbb I(\mathcal F_{\frak n(t)<M})}{T/\bar M},
\end{align}
where $\mathbb I(\mathcal F_{\frak n(t)<M})$ denotes the occurrence of decoding failures in the first $M-1$ HARQ rounds.
On the basis of \eqref{eqn:M_bar_def}, $\bar M$ can be updated as
\begin{equation}
    \begin{aligned}\label{bar_M}
        \bar M = \frac{T}{T-\sum_{t=1}^{T}\mathbb I(\mathcal F_{\frak n(t)<M})}.
    \end{aligned}
\end{equation}

\subsubsection{Truncation-Based Updating Strategy}
In order to ensure the convergence and stability of the training process, a truncation-based updating approach is proposed to update $\rho^{\rm new}$, $\nu^{\rm new}$, and $\bar M$. More specifically, the summations over $T$ time slots involved in \eqref{rho update}, \eqref{nu update}, and \eqref{bar_M} are truncated to the last $W$ time slots. Hence, $\rho^{\rm new}$, $\nu^{\rm new}$, and $\bar M$ are respectively rewritten as
\begin{align}\label{rho update1}
    \rho^{\rm new} = \left[\rho^{\rm now}-\tau_\rho \left(\bar P - \frac{\sum_{t=T-W+1}^T \mathcal P(t)}{W/\bar M}\right)\right]^+,
\end{align}
\begin{align}\label{nu update2}
    \nu^{\rm new} = \left[\nu^{\rm now} - \tau_\nu \left(\mathbb{\bar P}-\frac{\sum_{t=T-W+1}^{T}\mathbb{I}(\mathcal F_{\frak n(t)=M})}{W/\bar M}\right)\right]^+,
\end{align}
\begin{equation}
    \begin{aligned}\label{bar_M3}
        \bar M = \frac{W}{W-\sum_{t=T-W+1}^{T}\mathbb I(\mathcal F_{\frak n(t)<M})}.
    \end{aligned}
\end{equation}
The whole pseudocode of the DDPG-based algorithm is outlined in Algorithm \ref{alg:algorithm2}.

\begin{algorithm}
    \caption{The DDPG-based power allocation scheme}
    \label{alg:algorithm2}
    \begin{algorithmic}[1]
        \State Initialize $\rho$, $\nu$, $W$, $B$, $\bm \theta_\mu$,  and $ \bm \theta_\omega$
        \State Set $\tilde {\bm\theta}_\mu = \bm\theta_\mu$ and $\tilde{\bm \theta}_\omega = \bm\theta_{\omega}$
        \Repeat
        \State Choose action $a_t = \mu(s_t|\theta_\mu)$ based on $s_t$
        \State Observe $\mathcal R(t)$ and $s_{t+1}$ after executing $a_t$
        \State Update $\bar M$ by \eqref{bar_M3} 
        \State Calculate the reward $r_t$ by \eqref{reward}
        \State Store transition $({s}_t, {a}_t, {r}_t, {s}_{t+1})$ in replay buffer
        \State Sample a minibatch $\mathbb B_t$ from  prioritized replay buffer
        \State Update $\theta_\omega$, $\theta_\mu$, $\tilde{\theta}_\omega$, and $\tilde{\theta}_\mu$ by
        \eqref{theta_omega1}, \eqref{theta_mu1}, \eqref{theta_omega2} and \eqref{theta_mu2}, respectively
        \If {$t~{\rm mod}~I  = 0$ } 
            \State Update $\rho$ and $\nu$ by \eqref{rho update1} and \eqref{nu update2}, respectively
        \EndIf
        \Until convergence
    \end{algorithmic}
\end{algorithm}

\section{Numerical Analysis}\label{SIMULATION ANALYSIS}
This section is devoted to verify our analytical results through extensive simulations. For illustration, we assume that the average transmit SNRs in all HARQ rounds keep fixed, i.e., $\bar \gamma_1 = \cdots = \bar \gamma_M = \bar \gamma$. Unless otherwise stated, the system parameters are set as $R = 5$ bps/Hz, $ L = 50$ symbols, $\lambda = 1$, $M=5$, $N = 20$, $K = 3000$, $U = 10$, $B=3000$, $W=300$, and $I=100$.
\subsection{Verification of BLER Analysis}
In Figs. \ref{fig1}-\ref{fig3}, the numerical results via Monte Carlo simulation, trapezoidal approximation method, Gauss–Laguerre quadrature, and the asymptotic method are respectively obtained according to \eqref{origin_real}, Subsection \ref{sec:trap}, Subsection \ref{sec:dynamic}, and \eqref{Asy Con}. These numerical results are labeled as ``Sim.'', ``Trap.'', ``Gauss.'', ``Asy.'', respectively. In Fig. \ref{fig1}, the average BLER is plotted against the average transmit SNR by considering different $M$. It can be observed that the simulation results coincide with the ones through both trapezoidal approximation method and Gauss-Laguerre quadrature. This justifies the validity of our analytical results. Moreover, it can be observed from Fig. \ref{fig1} that the asymptotic results tend to the simulation ones as $\bar \gamma$ increases.


\begin{figure}[htbp]
\centering
\includegraphics{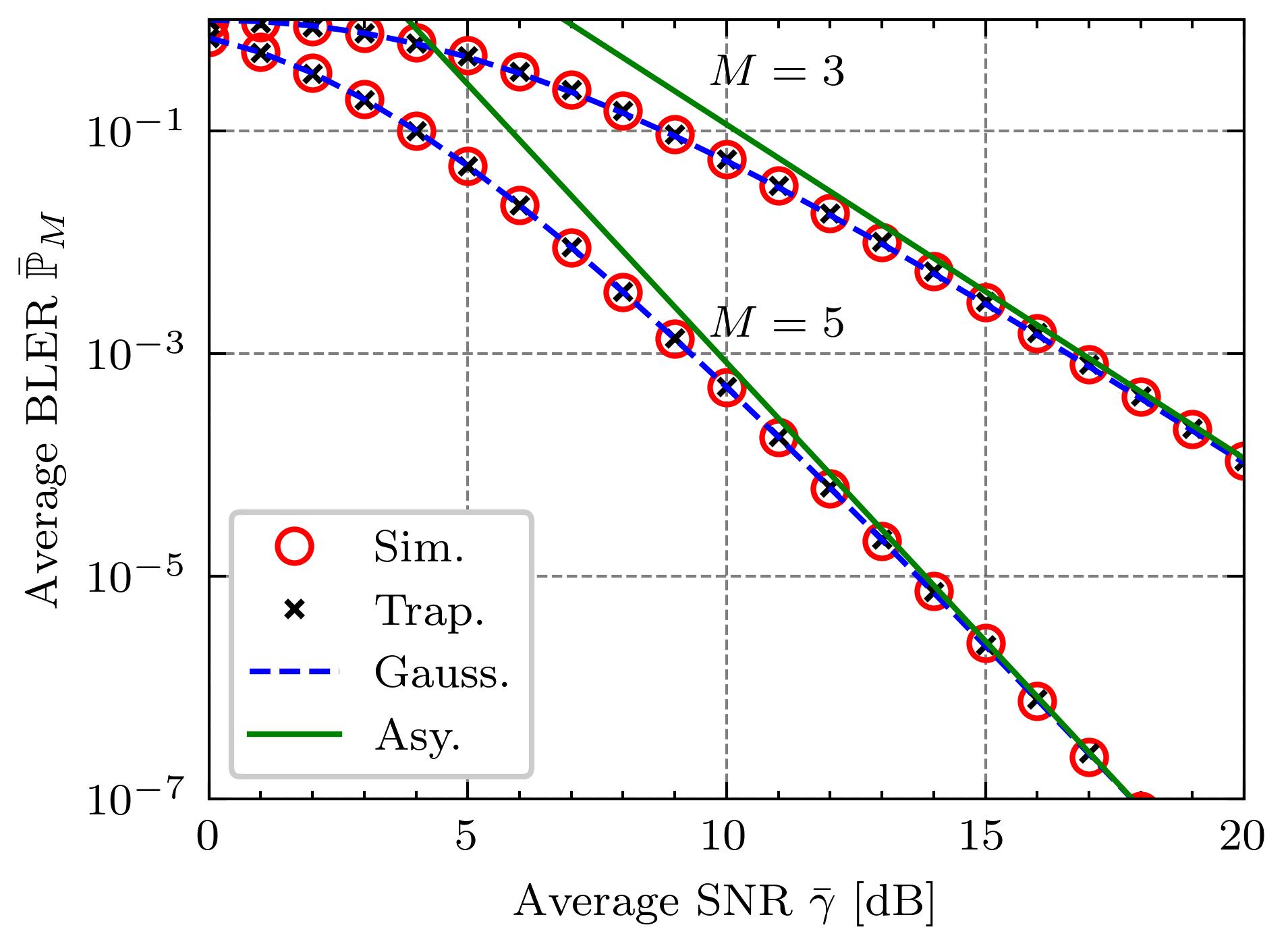}
\caption{The average BLER $\mathbb {\bar P}_M$ versus the average SNR $\bar \gamma$.}
\label{fig1}
\end{figure}


Fig. \ref{fig2} depicts the average BLER versus the maximum number of transmissions under different $\bar \gamma$. Likewise, the simulation results, trapezoidal approximation method, and Gauss–Laguerre quadrature agree well with each other. In addition, the gap between the simulation and the asymptotic results diminish as $\bar \gamma $ increases.


\begin{figure}[htbp]
\centering
\includegraphics{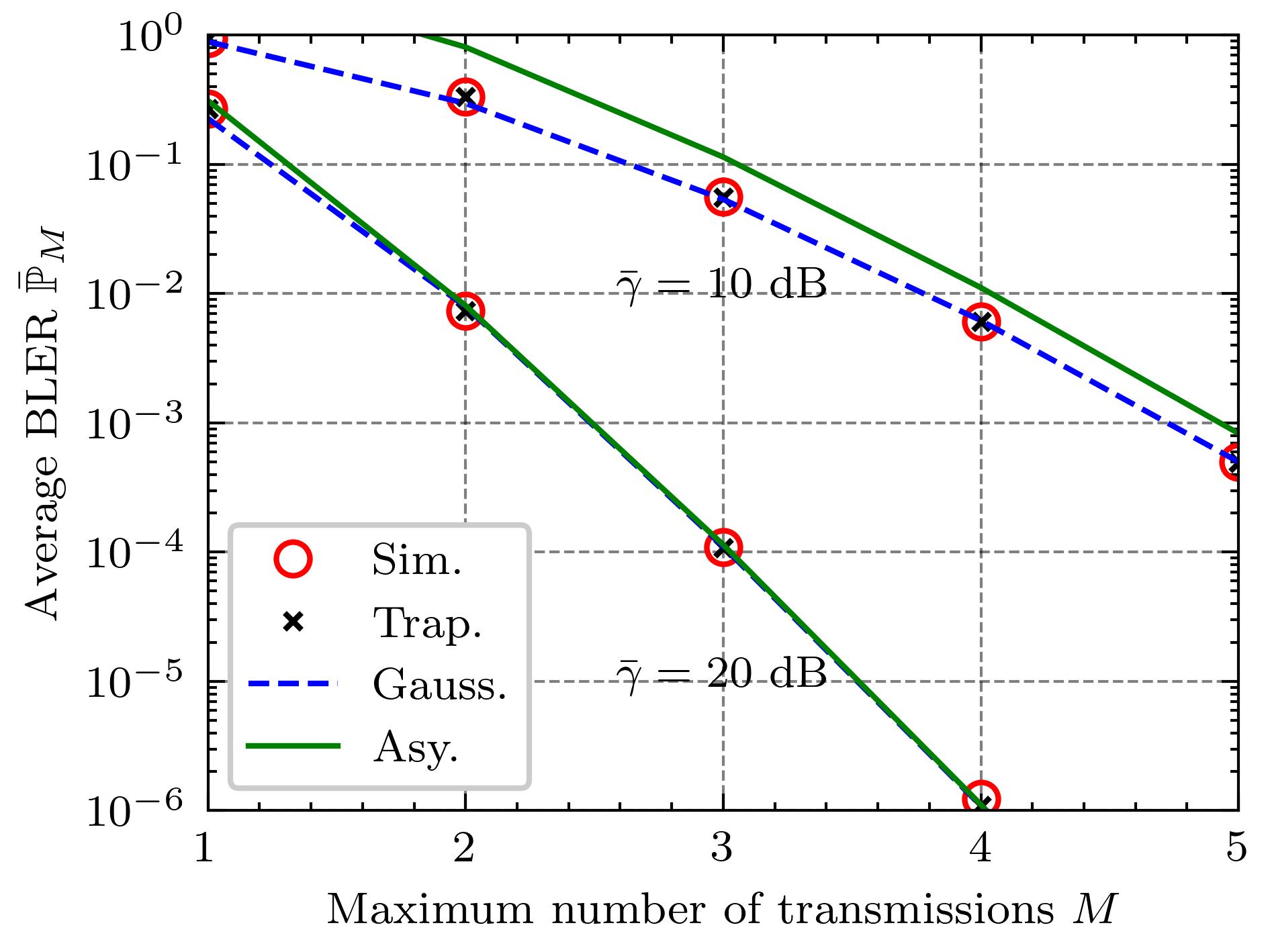}
\caption{The average BLER $\mathbb {\bar P}_M$ versus the maximum number of transmissions $M$.} 
\label{fig2}
\end{figure}

Fig. \ref{fig3} illustrates the impact of the length of the sub-codeword upon the average BLER $\mathbb P_M$ given $\mathcal K = 1000$ bits, where the transmission rate $R=\mathcal K/L$. It can be observed that the simulation and the analytical results are in perfect match. Moreover, the gap between the asymptotic and the simulation results becomes small as $\bar \gamma$ increases or $L$ increases.


\begin{figure}[htbp]

\centering
\includegraphics{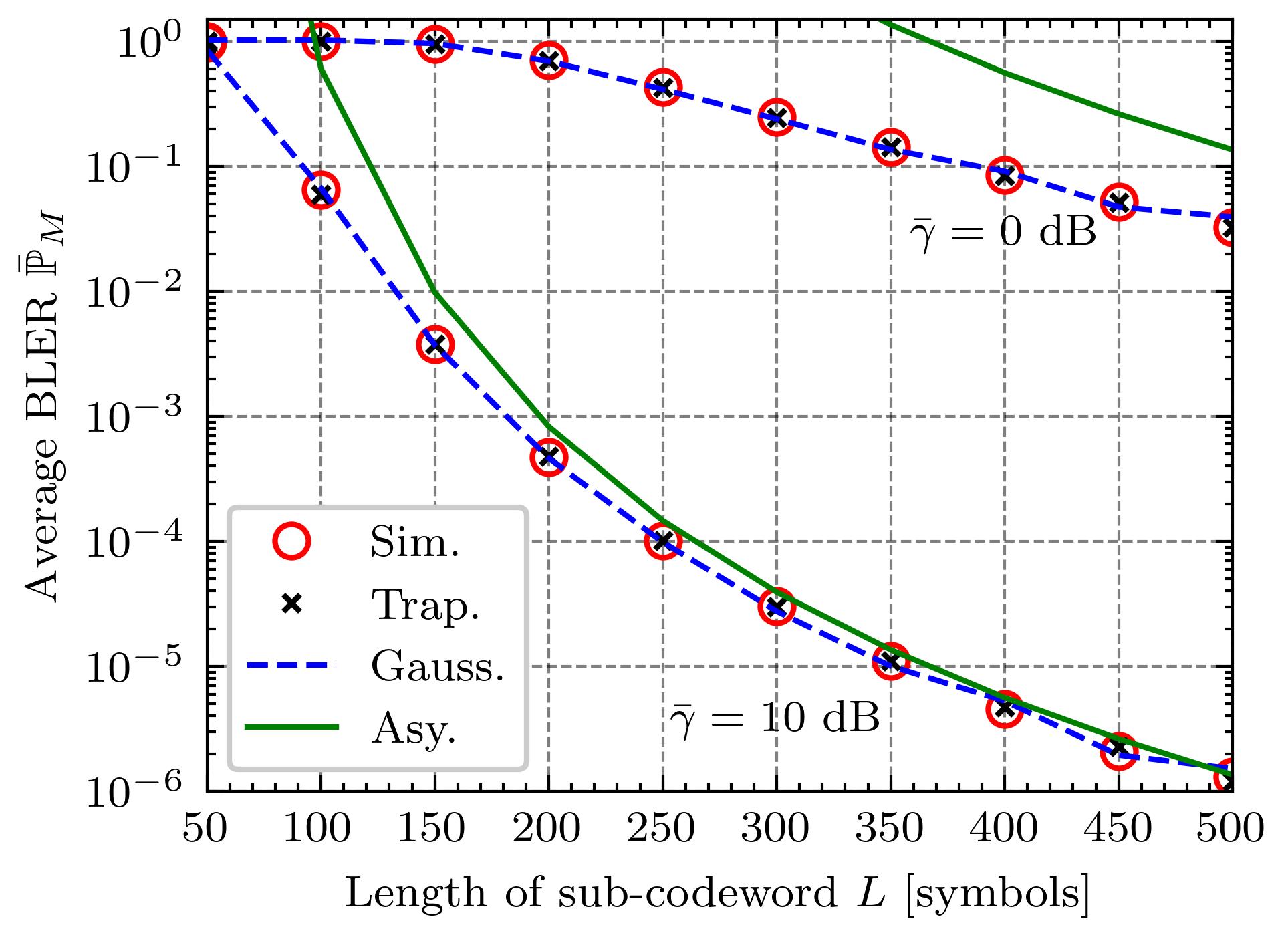}
\caption{The average BLER $\mathbb {\bar P}_M$ versus the length  $L$ of sub-codeword.}
\label{fig3}
\end{figure}

\subsection{Maximization of LTAT}
In Fig. \ref{Throughput}, the maximum LTAT is plotted against the maximum allowable average transmission power $\bar P$ for GP-based and DRL-based methods by setting $\mathbb{\bar P} = 0.01$. The numerical results reveal that the DRL-based method performs better than the GP-based one under stringent power constraint. This is because of the fact that the GP-based method is based on the asymptotic expression of the average BLER, which exhibits a high approximation error at low SNR. As a consequence, the GP-based method significantly underestimates the system performance under low SNR. This justifies the great value of the DRL-based method in designing the power allocation scheme. However, under a sufficiently loose power constraint, such as $\bar P> 25$ dB, the DRL-based method can attain almost a comparable performance as the GP-based one. Nevertheless, regarding the power allocation at high SNR, the GP-based method still constitutes an much appealing low-complexity algorithm by comparing to the DRL-based one.



\begin{figure}[htbp]
\centering
\includegraphics{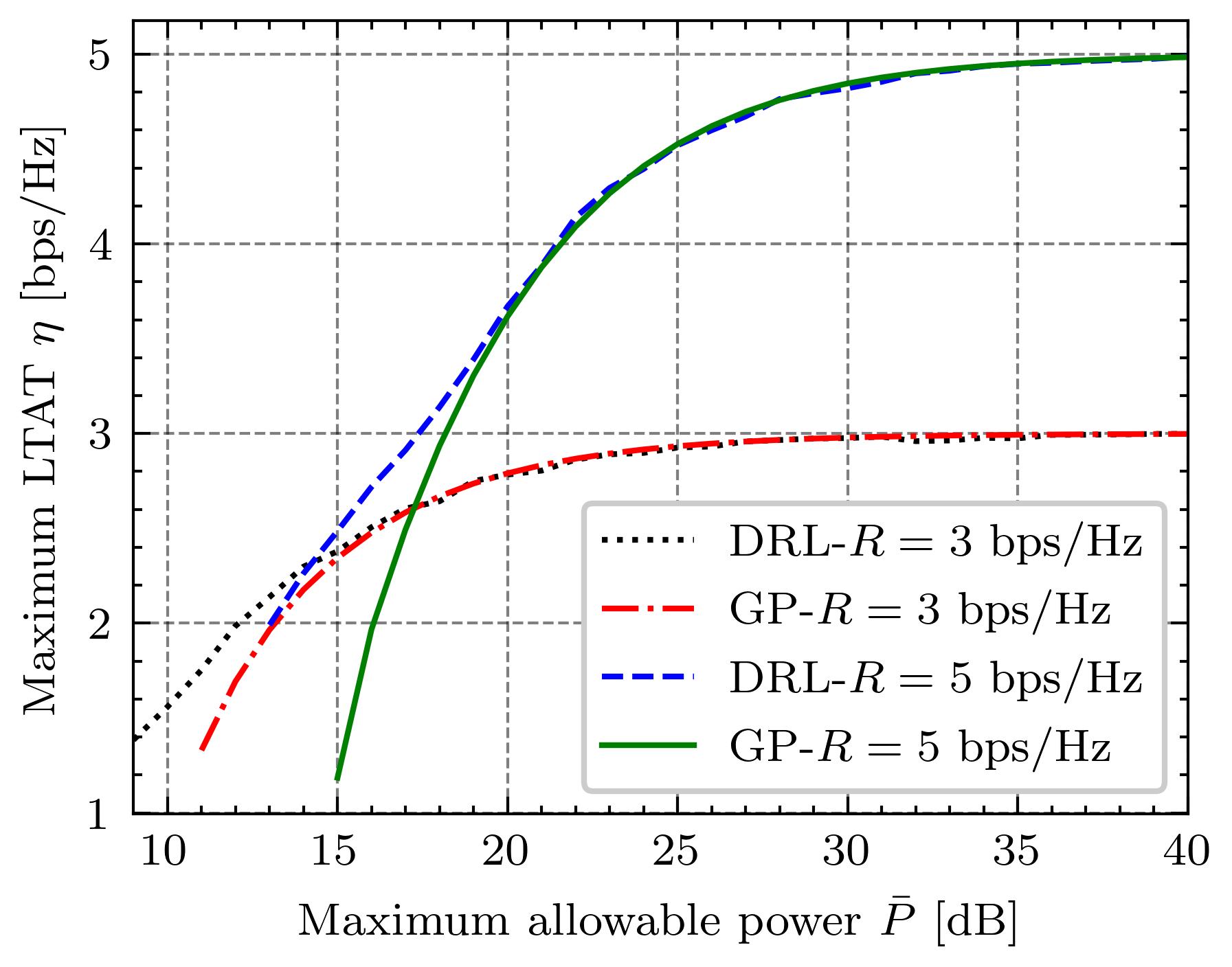}
\caption{The maximum LTAT $\eta$ versus the maximum allowable power $\bar P$.}
\label{Throughput}
\end{figure}


\subsection{The Convergence of the DDPG-Based Algorithm}
Figs. \ref{LTAT_conv}, \ref{LTA_BLER}, and \ref{LTA_Power} investigate the convergence and the stability of the DDPG-based algorithm, wherein $\bar P/\mathcal N_0 = 20$ dB, $\mathbb{\bar P} = 0.01$, $R = 5$ bps/Hz, $ L = 50$ symbols, and $M = 5$. 
Moreover, the labels ``Trun.'', ``No Trun.'', and ``Bound'' in the following three figures stand for the proposed DDPG-enabled power allocation with truncation-based updating, without truncation updating, and the upper bound of the average BLER/power, respectively. Fig. \ref{LTAT_conv} shows the convergence of the LTAT with regard to the number of training epoches for different truncation length $W$. It is observed that the proposed truncation-based updating strategy converges faster than the one without truncation. Besides, the truncation-based updating strategy achieves a slightly greater LTAT than the one without truncation. However, it is worth emphasizing that the truncation length $W$ as a hyperparameter should be properly chosen according to the size of replay buffer $B$, the updating period $I$ of dual variables, the learning rate, etc. For instance, it can be observed from Fig. \ref{LTAT_conv} that the convergence of the DRL-based method becomes unstable if $W$ is too small (say $W=10$). To ensure the stability as well as the performance, we set $W=300$ in Figs. \ref{LTA_BLER} and \ref{LTA_Power}.

\begin{figure}[htbp]
\centering
\includegraphics[width=3in]{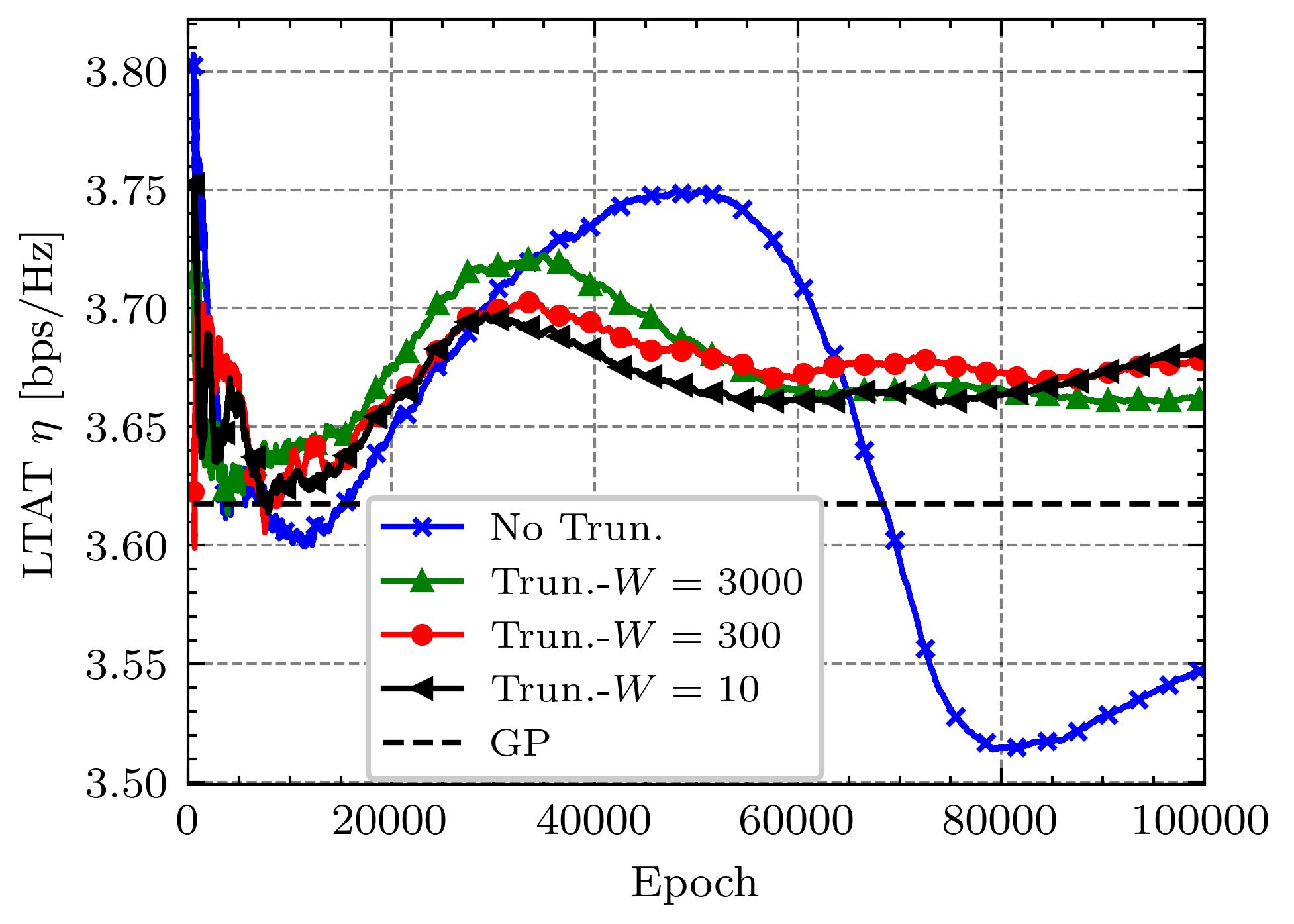}
\caption{The convergence of the LTAT.}
\label{LTAT_conv}
\end{figure}

Fig. \ref{LTA_BLER} plots the convergence curve of BLER versus the number of training epoches. Clearly, Fig. \ref{LTA_BLER} shows that the truncation-based updating approach surpasses the one without truncation in terms of the convergence and the stability of the average BLER curve. Moreover, it is found that the convergence value of the average BLER is lower than the bound $\mathbb{\bar P} = 0.01$.

\begin{figure}[htbp]
\centering
\includegraphics[width=3in]{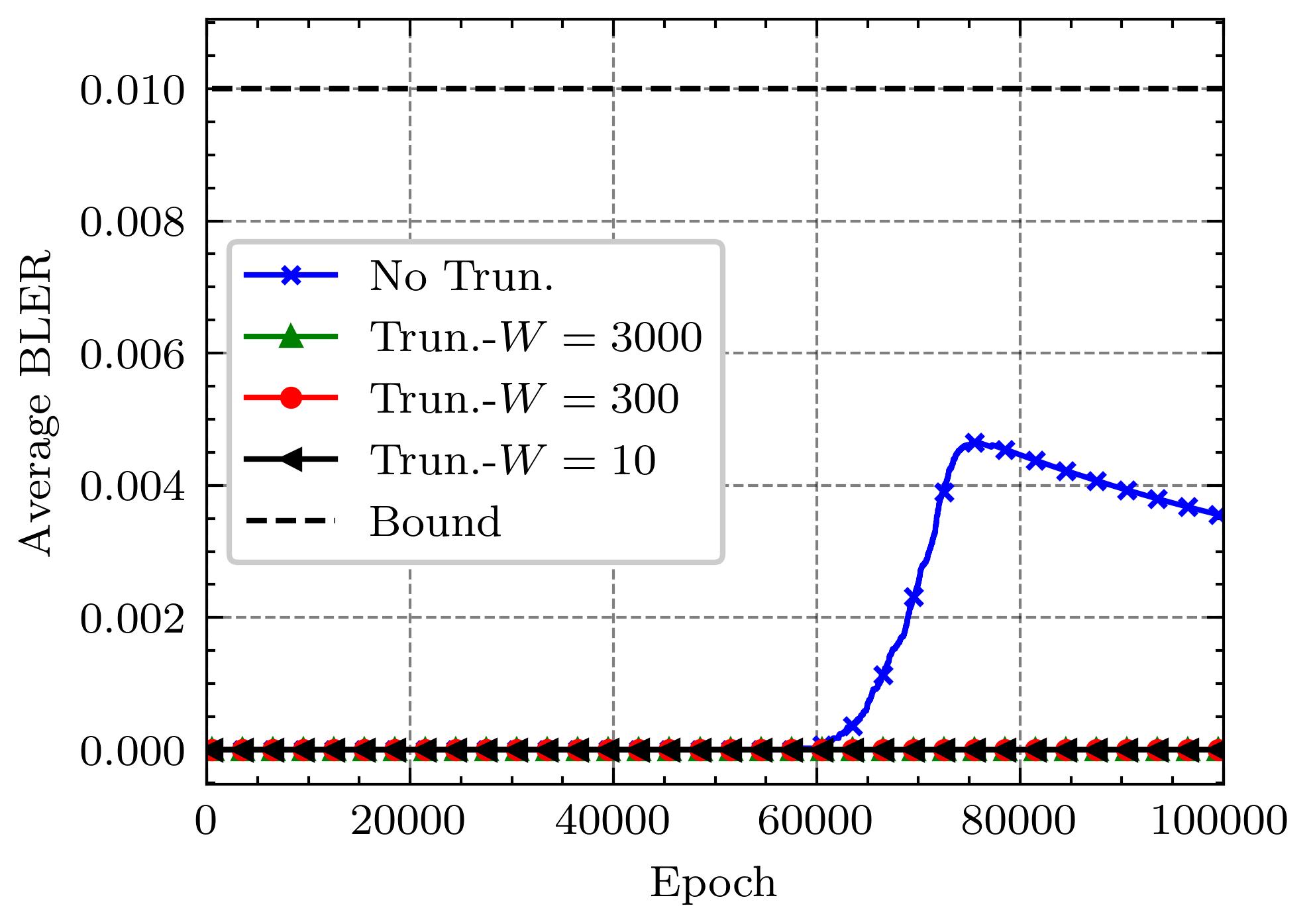}
\caption{The convergence of the average BLER.}
\label{LTA_BLER}
\end{figure}

In addition, the convergence of the average power is plotted against the number of training epoches in Fig. \ref{LTA_Power}. Clearly, it is found that the truncation-based updating approach for $W=300$ shows a more stable convergence than the one without truncation. Besides, it is observed that all the methods slowly converge to the maximum allowable average power. This is not beyond our intuition that the objective function $\eta$ is an increasing function of transmit power in each HARQ round. Accordingly, all the power would be used up to maximize the LTAT $\eta$. 
\begin{figure}[htbp]
\centering
\includegraphics[width=3in]{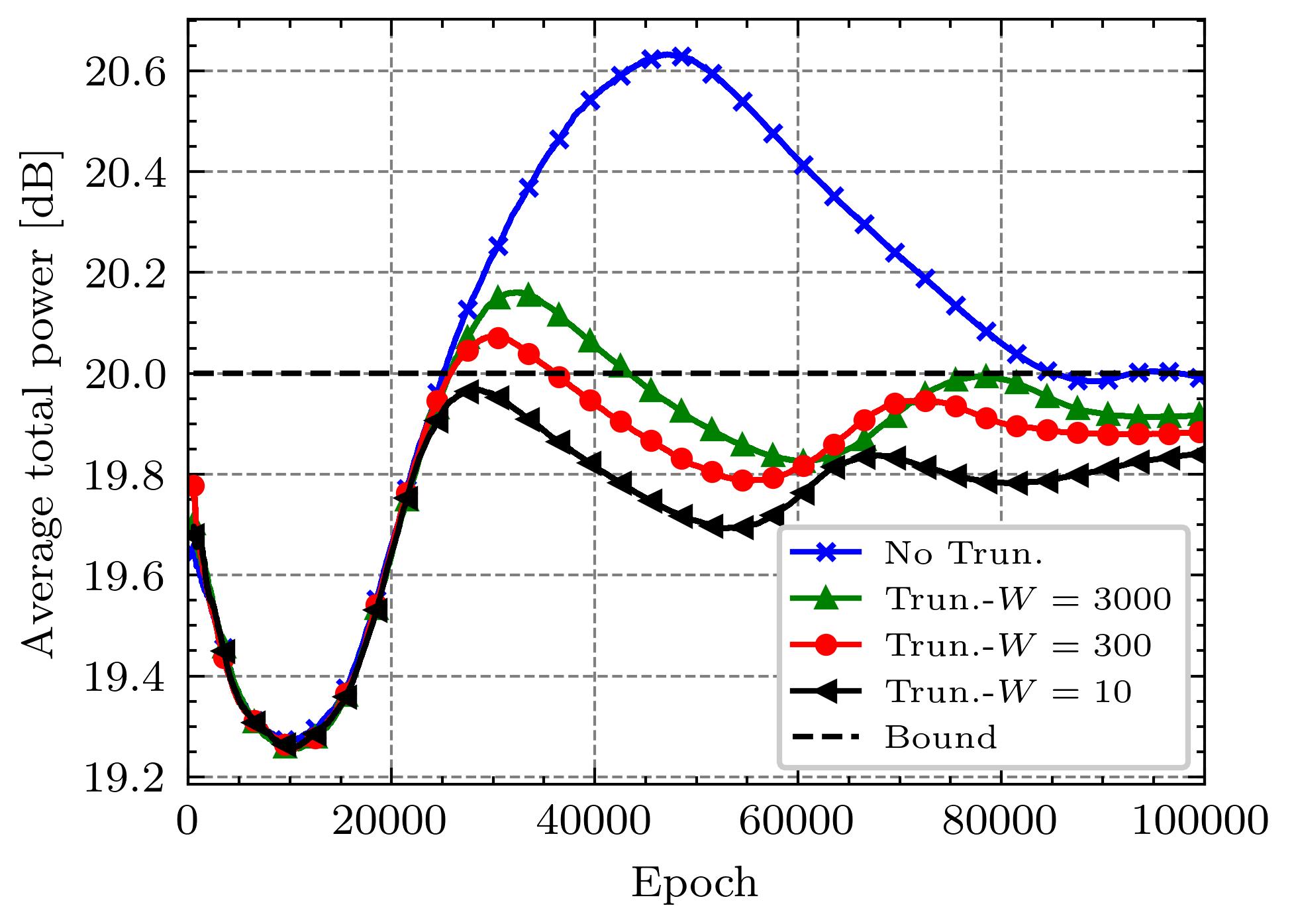}
\caption{The convergence of the average transmission power.} 
\label{LTA_Power}
\end{figure}


\section{Conclusion}\label{CONCULSION}
This paper has examined the average BLER and LTAT maximization of HARQ-IR-aided short packet communications. More specifically, the FBL information theory and the correlated decoding events among different HARQ rounds make it virtually impossible to derive a compact expression for the average BLER. Fortunately, by noticing the recursive form of the average BLER, the trapezoidal approximation method and Gauss–Laguerre quadrature method have been developed to numerically evaluate the average BLER. Besides, the dynamic programming has been applied to implement Gauss–Laguerre quadrature to avoid redundant calculations, consequently yields a reduction of the computational complexity by $(M+1)!$ times. In the meantime, to offer a simple and approximate expression for the average BLER, the asymptotic analysis has been conducted in the high SNR regime. Furthermore, this paper has investigated the power allocation scheme of HARQ-IR-aided short packet communications, which aims to maximize the  LTAT while guaranteeing the power and the BLER constraints. The asymptotic results have enabled solving the optimization problem through the GP. Whereas, the GP-based solution underestimates the system performance at low SNR, which is because of large approximation error of the asymptotic BLER in the circumstance. This has inspired us to utilize DRL to learn power allocation policy from the environment. Towards this end, the original problem has been transformed into a constrained MDP problem, which can be solved by combining DDPG framework and subgradient method. Additionally, the truncation-based updating strategy has been proposed to ensure the stable convergence of the training procedure. The numerical results have finally demonstrated that the DRL-based method is superior to the GP-based one, albeit at the price of high computational burden in offline training stage.

\begin{appendices}

\section{Proof of Corollary \ref{cor1}}\label{appendices:b}
As $N$ tends to infinity, one has
\begin{align}
        \tbinom{M+N}{N} =& \frac{(M+N)!}{N!M!}\notag\\
        \approx& \frac{\sqrt{2\pi (M+N)}\left(\frac{M+N}{e}\right)^{M+N}}{\sqrt{2\pi N}\left(\frac{N}{e}\right)^{N}M!} \notag\\
         \overset{\underset{\mathrm{(a)}}{}}{=}& \sqrt{1+\frac{M}{N}}e^{-M}\left(1 + \frac{M}{N}\right)^N \frac{(M+N)^M}{M!}\notag\\
        \approx & \frac{(M+N)^M}{M!}  \approx \frac{N^M}{M!},
\end{align}
where $n!\approx \sqrt{2\pi n} \left({n}/{e}\right)^n$ by using Stirling's formula and Step (a) holds by using $\lim_{x\to \infty}(1+1/x)^x = e$. 
\section{Proof of Lemma \ref{th1}}\label{appendices:a}
By using the following linearization approximation of $\mathcal Q_m(t)$ \cite[eq.(14)]{ref8}
\begin{equation}
    \begin{aligned}
        \mathcal{Q}_m(t)\simeq&
        \begin{cases}
        1 & t \leq R - {V}_m\\
        \frac{1}{2} - \frac{1}{2{V}_m}(t - R) & R-{V}_m \leq t \leq R+{V}_m\\
        0 & t \geq R + {V}_m
        \end{cases},
    \end{aligned}\label{linear}
\end{equation}
where ${V}_m = \sqrt{\frac{m\pi}{2L}}\log_2 e$. Then by substituting \eqref{linear} into \eqref{high SNR}, $\tilde f_m(x)$ can be obtained as \eqref{Linearization derivation 1}, as shown at the top of the next page,
\begin{figure*}
\begin{align}\label{Linearization derivation 1}
        &\tilde f_{m}(x) 
        \approx\int_{x}^{R - {V}_{m+1}}\tilde f_{m+1}(t)dt +\int_{R-{V}_{m+1}}^{R+{V}_{m+1}}\Big(\frac{1}{2} - \frac{t - R}{2{V}_{m+1}}\Big)\tilde f_{m+1}(t)dt=\notag\\
        & -\tilde F_{m+1}(x) + \frac{1}{2}\bigg(1+\frac{R}{{V}_{m+1}}\bigg)\tilde F_{m+1}(R+{V}_{m+1})+ \frac{1}{2}\bigg(1-\frac{R}{{V}_{m+1}}\bigg)\tilde F_{m+1}(R-{V}_{m+1})- \frac{1}{2{V}_{m+1}}\int\limits_{R-{V}_{m+1}}^{R+{V}_{m+1}}t\tilde f_{m+1}(t)dt,
\end{align}
\hrulefill
\end{figure*}
where $\tilde F_{m+1}(x)=\int_{0}^{x} \tilde f_{m+1}(t)dt$. Besides, by applying integration by parts, the last term in (\ref{Linearization derivation 1}) can be obtained as
\begin{align}\label{eqn:last_ter}
       \int_{R-{V}_{m+1}}^{R+{V}_{m+1}}t\tilde f_{m+1}(t)dt= &(R+{V}_{m+1})\tilde F_{m+1}(R+{V}_{m+1})\notag\\
        & -(R-{V}_{m+1})\tilde F_{m+1}(R-{V}_{m+1})\notag\\
        & - \int_{R-{V}_{m+1}}^{R+{V}_{m+1}}\tilde F_{m+1}(t)dt.
\end{align}
By putting \eqref{eqn:last_ter} into \eqref{Linearization derivation 1} together with some rearrangements, $\tilde f_{m}(x)$ can be finally derived as \eqref{high SNR1}.

\end{appendices}

\bibliographystyle{IEEEtran}
\bibliography{reference}

\end{document}